\newcolumntype{P}[1]{>{\centering\arraybackslash}p{#1}}
\newcommand{\tikznode}[2]{\relax
	\ifmmode%
	\tikz[remember picture,baseline=(#1.base),inner sep=0pt] \node (#1) {$#2$};
	\else
	\tikz[remember picture,baseline=(#1.base),inner sep=0pt] \node (#1) {#2};%
	\fi}
\newcommand{\addbar@}[3]{%
	\makebox[0pt][l]{%
		\raisebox{#1}[0pt][0pt]{%
			\kern#2 \scalebox{#3}[0.8]{$\m@th\mathchar"84$}%
		}%
	}%
}
\DeclareRobustCommand{\lambdabar}{\text{\addbar@{0.1ex}{0.18em}{1}}\lambda}
\newlist{enum-hypothesis}{enumerate}{1}
\setlist[enum-hypothesis]{label=(\arabic*),itemsep=0pt, parsep=0pt}
\setlist[enumerate,1]{label=\arabic*., ref=\arabic*, topsep=1pt, itemsep=2pt, parsep=0pt, 
	leftmargin=1.5em, itemindent=0em, labelsep=0.2em, labelwidth=1.3em}
\setlist[enumerate,2]{label=\alph*., ref=\theenumi.\alph*, topsep=1pt, itemsep=2pt, parsep=0pt, 
	leftmargin=0.5em, itemindent=0em, labelsep=0.2em, labelwidth=1.5em}
\setlist[enumerate,3]{label=\roman*., ref=\theenumii.\roman*, topsep=1pt, itemsep=2pt, parsep=0pt, 
	leftmargin=0.5em, itemindent=0em, labelsep=0.2em, labelwidth=1.2em}
\newtheorem{theorem}{Theorem}[section]
\newtheorem{proposition}[theorem]{Proposition}
\newtheorem{definition}[theorem]{Definition}
\theoremstyle{plain}
\theoremstyle{break}
\theoremstyle{nonumberplain}
\newtheorem{proof}{Proof}
\newcommand\bbZ{\mathbb{Z}}
\newcommand{\bbbone}{{\text{\usefont{U}{bbold}{m}{n}\char49}}}
\newcommand{\fbb}{\bbbone_F}
\newcommand{\mbb}{\bbbone_{2^{m}}}
\newcommand{\Man}{\mathcal{M}}
\newcommand{\CM}{C^\infty(\Man)}
\newcommand{\calA}{\mathcal{A}}
\newcommand{\calB}{\mathcal{B}}
\newcommand{\calH}{\mathcal{H}}
\newcommand{\calK}{\mathcal{K}}
\newcommand{\calO}{\mathcal{O}}
\newcommand{\calU}{\mathcal{U}}
\newcommand{\algA}{\calA}
\newcommand{\hs}{\calH}
\newcommand{\HF}{\calH_F}
\newcommand{\Ths}[1][]{\widetilde{\hs}}
\newcommand{\FDir}{\Dir_F}
\newcommand{\tw}{K}
\newcommand{\twx}{\tw_x}
\newcommand{\KDir}{\Dir^\tw}
\newcommand{\inn}{{\langle\, \cdot \, , \, \cdot \, \rangle}}
\newcommand{\inntw}{{\langle\, \cdot \, , \, \cdot \, \rangle_\tw}}
\newcommand{\Tadj}{{\dagger_{\tiny \tw}}}
\newcommand{\JF}{J_F}
\newcommand{\JH}{{\hat{J}}}
\newcommand{\kgm}{\gamma_\tw}
\newcommand{\tgm}{\tilde\gamma}
\newcommand{\GF}{\Gamma_F}
\newcommand{\AF}{\calA_F}
\newcommand{\g}{g}
\newcommand{\gr}{\g_{\scriptscriptstyle R}}
\newcommand{\gk}{\g_{\scriptscriptstyle \tw}}
\newcommand{\tm}{{T\Man}}
\DeclareMathOperator{\Aut}{Aut}
\DeclareMathOperator{\Inn}{Inn}
\newcommand{\gl}{{GL}}
\newcommand{\glv}{\gl(V)}
\newcommand{\US}{S} %Unit sphere
\newcommand{\CH}{{\hat{C}}}
\newcommand{\Sp}{\mathcal{S}} % Spin structure
\newcommand{\Dir}{D}
\newcommand{\defeq}{\vcentcolon=} % :=
\DeclareMathOperator{\Tr}{Tr} %% trace
\newcommand{\twphi}{{ \phi^{\,\tw}}}
\newcommand{\cl}{c}
\newcommand{\tcl}{\tilde{c}}
\newcounter{mnotecount}[section]
\renewcommand{\themnotecount}{\thesection.\arabic{mnotecount}}
\newcommand{\mnote}[1]%
{\protect{\stepcounter{mnotecount}}${}^{\text{\footnotesize$\bullet$\themnotecount}}$%
	\reversemarginpar%
	\marginpar{\raggedleft\footnotesize$\bullet$\themnotecount: #1}}
\newlength{\mnotewidth}
\definecolor{blueamu}{RGB}{0, 101, 189}
\definecolor{cyanamu}{RGB}{61, 183, 228}
\newcommand{\dhorline}[3][0]{%
	\tikz[baseline=-2pt]{\path[decoration={markings, 
			mark=between positions 0 and 1 step 2*#3 with 
			{\node[color=blueamu, fill, circle, minimum width=#3, inner sep=0pt, anchor=south west] {};}},
		postaction={decorate}] (0,#1) -- ++(#2,0);}}
\newcommand{\dvertline}[3][0]{%
	\tikz[baseline=2em]{\path[decoration={markings, 
			mark=between positions 0 and 1 step 2*#2 with 
			{\node[color=black!50, fill, circle, minimum width=#2, inner sep=0pt, anchor=south west] {};}},
		postaction={decorate}] (0, #1) -- ++(0,#3);}}
\newcommand\HUGE{\@setfontsize\Huge{28}{0}}\makeatother
\numberwithin{equation}{section}
\title{Emergence of Time\\ from a Twisted Spectral Triple\\ in Almost-Commutative Geometry}
\author{Gaston Nieuviarts\footnote{gaston.nieuviarts.wk@gmail.com}}
\date{October 2025}
\begin{document}
\maketitle

\vspace{-0.99cm}
 
\begin{abstract}
These proceedings present a synthesis of recent results on the emergence of pseudo-Riemannian structures from twisted spectral triples within the almost-commutative framework. It provides an algebraic mechanism for addressing the Lorentzian signature problem, demonstrating how the almost-commutative structure underlying the noncommutative Standard Model of particle physics gives rise to a Lorentzian spectral triple from a purely Riemannian setting. This notably offers an alternative to Wick rotation, provided by a notion of morphism connecting twisted and pseudo-Riemannian spectral triples. This remains a local result in the compact setting, rather than a full Lorentzian space-time with global causal structure.
 
\end{abstract}

\section*{Introduction}

Noncommutative geometry (NCG) has provided a powerful algebraic reformulation of Riemannian spin geometry in purely spectral terms: a spectral triple $(\calA, \calH, \Dir)$ encodes the full metric content of a space through the Dirac operator $\Dir$ and the representation of $\calA$ on $\calH$. In the case of a (compact) manifold $\Man$, the underlying algebra is the commutative algebra $\calA=\CM$ of smooth functions over $\Man$, see \cite{connes2013spectral,connes1996gravity, connes2006quantum} for more information. 

An interesting minimal noncommutative extension concerns spectral triples based on the so-called almost-commutative algebras $\CM\otimes\AF$ where $\AF$ is a finite-dimensional algebra. This structure offers a compelling framework for particle physics, allowing the geometrization of Riemannian gravity and fundamental interactions within a unified picture. This framework couples the classical manifold to a finite internal algebra and thereby packages gauge fields and fermions into geometry itself. Inner fluctuations of $\Dir$ generate the gauge bosons and the Higgs field. The spectral action, in turn, produces the bosonic Lagrangian and ties particle physics to geometry within a single unified framework. In this way, the (Euclidean) Standard Model with neutrino mixing arises from an almost-commutative geometry, see \cite{ConnLott90a, chamseddine1997spectral, chamseddine2007gravity} for more details on this construction.

Yet, in its standard formulation, NCG is essentially Euclidean: its axioms are built on a Hilbert space and a positive-definite inner product, so that time and Lorentzian symmetries do not enter the spectral data. For realistic physical models, most notably the noncommutative Standard Model, one requires a genuine notion of time and a consistent implementation of Lorentz covariance within the spectral framework. Addressing the signature problem in NCG is therefore both conceptually central and algebraically delicate. Different contributions tackling this issue can be found in \cite{barrett2007lorentzian, franco2014temporal, strohmaier2006noncommutative, devastato2018lorentz, d2016wick, van2016krein, bizi2018space}.

The problem can be stated as follows: mathematics tells us that only Riemannian geometry is natural within NCG's framework, whereas physics tells us that pseudo-Riemannian geometry is real. 
The present proceeding offers a synthesis of the approach developed in \cite{nieuviarts2025emergence, nieuvsignchange}, addressing this problem by proposing a Riemannian-like framework (Hilbert space, self-adjoint Dirac operator, etc.) from which pseudo-Riemannian structures and symmetries emerge naturally within the almost-commutative picture. 

The approach starts from the Riemannian side but equips spectral triples with a twist $\rho$ in the sense of Connes–Moscovici (see \cite{connes2006type} and Section~\ref{TwistedSec}).  
An important result of \cite{nieuviarts2025emergence, nieuvsignchange}, presented in Section~\ref{SecMorphism}, was to show how the so-called pseudo-Riemannian spectral triple introduced in Section~\ref{SecPseudoRiemST}, can be related to such twisted spectral triples through a morphism. This morphism implements a local change of signature, without introducing any complex numbers, via a reflection operator $r$ directly derived from the chosen twist. In this framework, the twist naturally plays the role of the parity operator.

The main result, discussed in Section~\ref{SecACTW}, demonstrates that extending the previously defined twisted spectral triple with an almost-commutative structure induces the emergence of the corresponding pseudo-Riemannian spectral triple with its Krein product, as a consequence of the algebraic constraints imposed by the almost-commutative spectral triple on the full Dirac operator.  
Here, the twist appears as a natural ingredient, gluing the commutative and finite parts at the level of the fluctuations, thus establishing an interesting connection between twisted and almost-commutative spectral triple structures.

This mechanism makes precise how time emerges from the twisted spectral data, i.e., from the chosen twist $\rho$ within the almost-commutative spectral triple framework underlying the noncommutative Standard Model. This provides a controlled and algebraically transparent pathway from Euclidean NCG to Lorentzian physics within the spectral paradigm. It refines the prospects for a Lorentzian formulation of the noncommutative Standard Model and offers a conceptual alternative to Wick rotation.

\section{Twisted Spectral Triples}
\label{TwistedSec}
The following section presents the essentials on twisted spectral triples. A particular focus is done on the associated product, the corresponding unitary notion, their relation with twisted fluctuations and the real structure. More details can be found in \cite{nieuvsignchange, nieuviarts2025emergence}.
 
We recall the definition of a spectral triple. A spectral triple $(\algA, \calH,\Dir)$ is the data of an involutive unital algebra
$\algA$ represented by bounded operators on a Hilbert space $\calH$, and of a self-adjoint operator $\Dir$ acting on $\calH$ such that the resolvent $(i + \Dir^2)^{-1}$ is compact and that for any $a\in\algA$, $[\Dir, a]$ is a bounded operator. A real spectral triple $(\algA, \hs, \Dir, J)$ is defined by the introduction of the operator $J$ in the spectral triple, where $J$ is an anti-unitary operator so that
\begin{equation}
	\label{EqDefStructParam}
	J^2 = \epsilon_0,\qquad\qquad\quad J \Dir = \epsilon_1 \Dir J,\qquad\qquad\qquad \epsilon_0, \epsilon_1 \in \{\pm 1\}.
\end{equation}
We refer to \cite{connes2013spectral,connes1996gravity, connes2006quantum} for more details on spectral triples axioms.

Spectral triples have been successfully defined for Type~I and Type~II von~Neumann algebras, where the existence of a faithful trace allows for a natural formulation of the spectral data. However, Type~III algebras, appearing in quantum field theory for example, pose a significant challenge due to the absence of such a trace. To address this difficulty, A.~Connes and H.~Moscovici introduced in~\cite{connes2006type} the concept of \emph{twisted spectral triples}, where an automorphism $\rho$ called twist is incorporated into the definition so as to compensate for the lack of a trace and restore a viable spectral framework. Twisted spectral triples were then studied in different contexts, for instance by A. Devastato, S. Farnsworth, M. Filaci, G. Landi, F. Lizzi, P. Martinetti, D. Singh, R. Ponge see \cite{TwistGaugeLandiMarti2018, martinetti2022lorentzian, TwistLandiMarti2016, ponge2016index, filaci2023critical}.

A twisted spectral triple $(\calA, \calH, \Dir, \rho)$ is defined as follows. One introduces an automorphism $\rho \in \Aut(\calA)$ satisfying the regularity condition
\begin{equation}
	\rho(a^{\dagger})=(\rho^{-1}(a))^{\dagger}\qquad \qquad \qquad \forall a\in\calA.
\end{equation}
The usual commutator $[\Dir,a]$ is replaced by the twisted commutator
\begin{equation}
	[\Dir,a]_\rho:= \Dir a-\rho(a)\Dir,
\end{equation}
and only $[\Dir,a]_\rho$ is required to be bounded. The specifications on $\calA,\calH$ and $\Dir$ are the same as for spectral triples. The axioms of real twisted spectral triples $(\calA, \calH, \Dir, J, \rho)$ coincide with those of ordinary spectral triples to the exception of the replacement of the first-order condition by the twisted first-order condition:
\begin{equation}
	[[\Dir, a]_\rho, b^\circ]_{\rho^\circ}=0\qquad\qquad\qquad \forall a\in\calA, \,\, b^\circ \in\calA^\circ,
\end{equation}
where $\calA^\circ$ is the opposite algebra, and $\rho^\circ$ acting on $\calA^\circ$ is defined by $\rho^\circ(a^\circ):= (\rho^{-1}(a))^\circ$.
 
Derivations are replaced by twisted derivations $\delta_\rho (a):= [\Dir,a]_\rho$, and the corresponding set of twisted $1$-forms is defined by
\begin{equation}
	\Omega^1_\Dir(\calA,\rho):=\{\, \sum_ia_i[\Dir, b_i]_{\rho}\,\, \mid\,\, a_i, b_i \in \calA\, \}.
\end{equation}
The set $\Omega^1_\Dir(\calA, \rho)$ forms an $\calA$-bimodule for the following product
\begin{equation}
	\label{EqLR}
	a\, \cdot \, A_\rho \, \cdot \, b=\rho(a)A_\rho b,
\end{equation}
for which the twisted derivation satisfies
\begin{equation}
	\delta_\rho(ab)=\delta_\rho(a)b+\rho(a)\delta_\rho(b)=\delta_\rho(a)\, \cdot \, b+a \, \cdot \, \delta_\rho(b).
\end{equation}
Twisted fluctuations of the Dirac operator then take the form
\begin{equation}
	\Dir_{A_\rho}:= \Dir+A_\rho+\epsilon_1 JA_\rho J^{-1}\qquad \qquad \text{with} \qquad \qquad A_\rho\in \Omega^1_\Dir(\calA,\rho).
\end{equation}
We say that a twisted spectral triple is equipped with a $\bbZ_2$-symmetry $\Gamma$\footnote{We introduce a generalized definition of the operator $\Gamma$. Its precise geometric interpretation as either a grading or a $\bbZ_2$-symmetry is determined by its commutation relation with the Dirac operator.} if there exists a self-adjoint involution $\Gamma$ so that $[\Gamma, a]=0$ for any $a\in\calA$ and
 \begin{equation}
	\label{EqDefStructParamExt}
J \Gamma = \epsilon_2 \Gamma J,\qquad\qquad\quad\Dir\Gamma= \epsilon_3\Gamma\Dir, \qquad\qquad\qquad \epsilon_2, \epsilon_3\in \{\pm 1\}.
\end{equation}
The even case corresponds to $\epsilon_3=-1$, where $\Gamma$ is the standard $\bbZ_2$-grading on $\hs$. Conversely, in the absence of grading operator, the odd case will correspond to $\epsilon_3=1$.
 
More details on the origin of this construction can be found in \cite{TwistLandiMarti2016, TwistGaugeLandiMarti2018}.

We can introduce the $\rho$-product $\langle\, \cdot \, , \, \cdot \, \rangle_\rho$ defined in \cite{devastato2018lorentz} by the relation
\begin{align}
	\langle\psi , O\psi^\prime \rangle_\rho=\langle\rho(O)^\dagger\psi ,\psi^{\prime} \rangle_\rho\qquad\qquad \forall \psi, \psi^{\prime} \in \calH, \, O\in \calB(\calH).
\end{align}
We require $\rho$ to be $\calB(\calH)$-regular\footnote{i.e., regular on all $\calB(\calH)$. This stronger requirement is essential for many results.}. This implies the equality of left and right adjoints $\dagger_L$ and $\dagger_R$ (see \cite{nieuvsignchange} for the definitions) and ensures the uniqueness of the adjoint:
\begin{equation}
	(\, \cdot \,)^{+}:=(\, \cdot \,)^{\dagger_L}=(\, \cdot \,)^{\dagger_R}=\rho(\, \cdot \, )^{\dagger}.
\end{equation}
This leads to the notion of $\rho$-unitarity, defined by $U_\rho U_\rho^+=U_\rho^+ U_\rho=\bbbone$, with the corresponding set $\calU_\rho (\calB(\calH))$. As will be shown in Section~\ref{SecPseudoRiemST}, the orthochronous spin group provides such an example. The set of standard unitaries\footnote{Unitaries for the Hilbert product on $\calH$.} is denoted by $\calU(\calB(\calH))$.

The motivation for introducing $(\, \cdot \,)^{+}$ and $\calU_\rho (\calB(\calH))$ comes from the following observation. Using the adjoint action $Ad(a)\defeq aJaJ^{-1}$ for any $a\in\calA$, twisted fluctuations of the Dirac operator can be generated through elements $(u,u_\rho)\in (\calU(\algA),\calU_\rho (\algA))$ by
\begin{align}
	\Dir_{A_\rho} &:= Ad(u)\Dir (Ad(u))^+, \\
	\Dir_{A_\rho} &:= Ad(u_\rho)\Dir (Ad(u_\rho))^\dagger,
\end{align}
as discussed in \cite{TwistGaugeLandiMarti2018} and \cite{martinetti2024torsion}. In the following, we focus on the second way using $\rho$-unitaries, since this choice preserves the self-adjointness of the Dirac operator.

When $\rho\in\Inn(\calB(\calH))$, there exists a unitary operator $\tw\in \calU(\calB(\calH))$ on $\calH$ such that
\begin{equation}
	\rho(O)=\tw O \tw^\dagger\qquad\qquad\qquad \forall O \in\calB(\calH).\qquad\qquad\qquad
\end{equation}
In that case, the $\rho$-product can be expressed in terms of $\tw$ as
\[
\langle\, \cdot \, , \, \cdot \, \rangle_\tw := \langle\, \cdot \, , \tw \,\cdot \, \rangle.
\]
We will focus on this case in what follows, adopting the specific notations $(\, \cdot \,)^\Tadj:= \rho(\, \cdot \, )^\dagger$ for the adjoint and $U_\tw\in \calU_\tw(\calB(\calH))$ for the corresponding set of $\tw$-unitaries\footnote{This notation is more adapted as $\tw$ will be a central ingredient in the construction and that different operators $\tw$ can lead to the same twist $\rho$.}.  
If $\rho$ is a $*$-automorphism such that $\rho^2=\bbbone$, then $(\, \cdot \,)^\Tadj$ is an involution.  

\begin{proposition}[\cite{nieuvsignchange}]
	The $\calB(\calH)$-regularity condition implies that $\tw=\exp(i\theta)\tw^\dagger$.
\end{proposition}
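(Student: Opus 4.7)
The plan is to unpack the $\calB(\hs)$-regularity condition $\rho(O^\dagger) = (\rho^{-1}(O))^\dagger$, now valid on every $O \in \calB(\hs)$, using the inner form $\rho(\,\cdot\,) = \tw\,(\,\cdot\,)\,\tw^\dagger$, and to translate it into a purely algebraic constraint on $\tw$ alone.

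First, I would substitute the inner expression on both sides of the regularity condition. For any $O \in \calB(\hs)$, the left-hand side reads $\rho(O^\dagger) = \tw\, O^\dagger\, \tw^\dagger$, while the right-hand side reads $(\rho^{-1}(O))^\dagger = (\tw^\dagger O \tw)^\dagger = \tw^\dagger\, O^\dagger\, \tw$. Setting $X := O^\dagger$, which also ranges over all of $\calB(\hs)$, the regularity condition becomes
\begin{equation*}
	\tw\, X\, \tw^\dagger \;=\; \tw^\dagger\, X\, \tw, \qquad \forall X \in \calB(\hs).
\end{equation*}

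Next, I would rearrange this into a commutation statement. Multiplying on the left by $\tw^\dagger$ and on the right by $\tw^\dagger$, and using the unitarity of $\tw$, one obtains $\tw^{\dagger 2}\, X = X\, \tw^{\dagger 2}$ for every $X \in \calB(\hs)$. Hence $\tw^{\dagger 2}$ lies in the center of $\calB(\hs)$, which is reduced to scalar multiples of the identity: $\tw^{\dagger 2} = c\,\bbbone$ with $c \in \bbC$. Since $\tw$ is unitary, $|c|=1$, so $c = \exp(-i\theta)$ for some $\theta \in \bbR$.

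Finally, inverting gives $\tw^2 = \exp(i\theta)\,\bbbone$, and multiplying on the right by $\tw^\dagger$ yields the claim $\tw = \exp(i\theta)\,\tw^\dagger$. There is no serious obstacle; the only subtle point is that the step deducing from centrality of $\tw^{\dagger 2}$ that it is scalar relies crucially on the commutation holding against \emph{every} $X\in\calB(\hs)$, not merely against elements of $\calA$. This is precisely why the stronger $\calB(\hs)$-regularity assumption was singled out in the preceding discussion, and it is what makes the proposition sharper than what mere algebra-level regularity would yield.
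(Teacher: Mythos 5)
Your proof is correct and follows the same route as the source: unpacking $\calB(\calH)$-regularity into $\tw X\tw^\dagger=\tw^\dagger X\tw$ for all $X\in\calB(\calH)$, deducing that $\tw^{\dagger 2}$ commutes with everything and is therefore a unimodular scalar multiple of $\bbbone$ (since $\calB(\calH)$ is a factor and $\tw$ is unitary), and concluding $\tw=\exp(i\theta)\tw^\dagger$. Your closing remark correctly identifies the crucial point, namely that centrality in $\calB(\calH)$, not merely in $\calA$, is what forces $\tw^{2}$ to be scalar.
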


The inner product $\inntw$ is Hermitian and indefinite if and only if $\tw=\tw^\dagger$, in which case it defines a Krein product. This result provides a direct and fundamental connection between twists and Krein products, establishing the first elements of the bridge between twisted spectral triples and pseudo-Riemannian (indefinite) structures.

In the following, we restrict ourselves to the case $\tw=\tw^\dagger$ and denote the corresponding twisted spectral triple by $(\calA,\, \calH,\, \Dir,\, J,\, \Gamma,\, \tw)$.

We now focus on the real structure and require the real operator $J$ to relate the left and right actions on the $\calA$-bimodule $\Omega^1_\Dir(\calA, \rho)$. The left and right actions then satisfy
\[
a \cdot (\, \cdot \, ) \cdot b = \rho(a)(\, \cdot \, )b = l(a)r(b)(\, \cdot \,),
\]
with $l(a)=Jr(a)^\dagger J^{-1}=\rho(a)$.

Let $\JH$ denote the real structure for usual spectral triples, i.e. relating left and right actions on the $\calA$-bimodule $\Omega^1_\Dir(\calA)$ 
\[
\hat{l}(a)\hat{r}(b)(\, \cdot \, ) = a \cdot (\, \cdot \, ) \cdot b = a(\, \cdot \,)b.
\]
\begin{proposition}[\cite{nieuviarts2025emergence}]
	The action $a \cdot (\, \cdot \, ) \cdot b=\rho(a)(\, \cdot \, ) b$ in \eqref{EqLR} is related to the real structure
	\begin{align}
		\label{EqTwistReal}
		J:= \tw \JH.\qquad\qquad
	\end{align}
\end{proposition}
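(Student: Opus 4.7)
The plan is to prove the identity $J\, r(a)^{\dagger} J^{-1} = \rho(a) = l(a)$ by direct substitution of $J := \tw\JH$, reducing everything to the inner implementation $\rho(\cdot) = \tw(\cdot)\tw^{-1}$ combined with the defining relation of the untwisted real structure $\JH$.

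First I would observe that the twisted bimodule product \eqref{EqLR} only affects the left factor: $x \cdot b = xb$ as before, so the right action is unchanged, and $r(b) = \hat{r}(b) = \JH b^{\dagger}\JH^{-1}$ by the defining property of $\JH$ recalled just above the statement. Since $\JH$ is anti-unitary, conjugation by it preserves adjoints, $(\JH X \JH^{-1})^{\dagger} = \JH X^{\dagger}\JH^{-1}$, which immediately yields $r(a)^{\dagger} = \JH a \JH^{-1}$. The core computation is then
\[
J\, r(a)^{\dagger}\, J^{-1} \;=\; \tw\JH\,(\JH a \JH^{-1})\,\JH^{-1}\tw^{-1} \;=\; \tw\,\JH^{2} a \JH^{-2}\,\tw^{-1}.
\]
Since $\JH^{2}$ is a central sign, $\JH^{2} a \JH^{-2} = a$, so the right-hand side collapses to $\tw a \tw^{-1} = \rho(a)$ by the inner implementation of $\rho$; this is precisely $l(a)$ by the twisted left action in \eqref{EqLR}.

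It remains to verify that $J = \tw\JH$ is itself anti-unitary, which is immediate since $\tw$ is bounded unitary and $\JH$ is anti-unitary; the standing assumption $\tw = \tw^{\dagger}$, already used to obtain a Hermitian (Krein) pairing, further ensures that $J$ is well-behaved under the involution. The main — and essentially only — subtlety is to keep track of the anti-linear behaviour of $\JH$ when taking adjoints, because the identity $(\JH X \JH^{-1})^{\dagger} = \JH X^{\dagger}\JH^{-1}$ depends on that anti-linearity; once this is settled, the whole statement is a transparent algebraic cancellation driven by the inner implementation of the twist.
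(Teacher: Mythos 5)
Your computation is correct and is exactly the direct verification one expects: the right action is the untwisted one, $r(a)^{\dagger}=\JH a\JH^{-1}$ by anti-unitarity of $\JH$, and conjugating by $J=\tw\JH$ collapses (via $\JH^{2}=\epsilon_{0}$ being a central sign and $\tw(\cdot)\tw^{-1}=\rho$) to $l(a)=\rho(a)$. The proceeding itself defers the proof to the cited companion paper rather than printing one, but your argument is the standard substitution that proof would consist of, with the one genuine subtlety — tracking adjoints through the anti-linear conjugation — handled correctly.
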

Notably, this real structure also implements the adjoint action of $\tw$-unitary operators:
\begin{equation}
	\psi\,\to\, u_\tw \psi u_\tw^{-1}=u_\tw \psi u_\tw^\Tadj=u_\tw J u_\tw J^{-1}\psi\qquad\qquad (\psi, u_\tw) \in (\calH,\, \calU_\tw (\algA)).
\end{equation}
Such a definition of the real structure is therefore natural within the twisted framework.

Spectral triples are recovered in the particular case $\tw=\bbbone$, leading to $
[\Dir, a]_\rho\equiv[\Dir, a]$, $\inntw\equiv\inn$ with corresponding unitaries $u_\tw\equiv u$, and for the real structure $J=\tw\JH\equiv\JH$ that implement the usual adjoint action $\psi\,\to\, u \psi u^{-1}=u \psi u^\dagger=u \JH u \JH^{-1}\psi$.

From now on, we fix the relations between $\rho$ and $J,\Gamma$ by the following conditions
\begin{equation}
	\label{EqRelKJG}
	\tw J=\epsilon J\tw,\qquad\qquad\quad \tw \Gamma=\epsilon^\prime \Gamma\tw,\qquad\qquad\qquad  \epsilon, \epsilon^\prime\in \{\pm 1\}.
\end{equation}
This imply notably that $\rho=\rho^\circ$ on $\calA^\circ$. 

\section{Pseudo-Riemannian Spectral Triples}
\label{SecPseudoRiemST}
The following section offers an introduction to the so-called pseudo-Riemannian spectral triple. The proposed approach starts from the case of even-dimensional complex Clifford algebras, showing how a twist can be naturally defined in this framework and how it relates to both the orthochronous spin group and the Krein product underlying the pseudo-Riemannian structure. Further details can be found in \cite{nieuviarts2025emergence}.
 
Let $V\simeq \mathbb{R}^{2m}$ be a real vector space endowed with a metric $\g$ of signature $(n, 2m-n)$, inducing the splitting $V=V^+\oplus V^-$.  
For every $v\in V$, we define the operator $\cl(v)$, the complex linear representation of $v$, such that
\begin{equation}
	\cl(u)\cl(v) + \cl(v)\cl(u) = 2\g(u, v)\mbb \qquad \qquad \forall\, u, v \in V.
\end{equation}
Hence, each $\cl(v)$ is a $2^m \times 2^m$ complex matrix.  
We identify the elements $v\in V$ with their representation $\cl(v)$ from now on.  

The (complex) Clifford algebra $Cl(V, \g)$ is generated by $V$ together with the identity $\mbb$.  
In even dimensions, one has the isomorphism $Cl(V, \g)\simeq M_{2^m}(\mathbb{C})$, implying that $\Aut(Cl(V, \g))=\Inn(Cl(V, \g))$.  
An irreducible representation of $Cl(V, \g)$ is then given by the module $M=\mathbb{C}^{2^m}$.  
Equipped with a Hermitian inner product $\langle \, \cdot \, , \, \cdot \, \rangle$ whose adjoint is denoted by $\dagger$, this defines the Hilbert space $\calH:=(M,\langle \, \cdot \, , \, \cdot \, \rangle)$.
 
\begin{definition}[\cite{nieuviarts2025emergence}]
	The structural operators $\rho, \chi, \kappa$, called the twist, grading, and charge conjugation, are defined as elements of $\Aut(Cl(V, \g))$ such that for all $v \in V$:
	\begin{align}
		\label{EqDefRhoAll}
		&\rho(v):= v^{\dagger}, \qquad\qquad\quad \chi(v):= -v, \qquad\qquad\quad \kappa(v):= -\bar{v}.
	\end{align}
\end{definition}
Note that $\rho$ transforms into the adjoint only on $V$, so that $\rho$ is indeed an automorphism (not an anti-automorphism). The operators $\rho, \chi$, and $\kappa$ are involutions on $Cl(V, \g)$ and elements of $\Aut(V)$ satisfying $\rho\circ \chi=\chi \circ \rho$, $\rho\circ \kappa=\kappa \circ \rho$, and $\kappa\circ \chi=\chi \circ \kappa$.  
An important direct consequence of this definition is that the automorphism $\rho$ is $\mathcal{B}(\mathcal{H})$-regular.

Since $\Aut(Cl(V, \g))=\Inn(Cl(V, \g))$, there exist operators $\tw, \Gamma, C\in Cl(V, \g)$ such that
\begin{equation}
	\rho(\, \cdot \,)=\tw(\, \cdot \,)\tw^{-1},\qquad\quad \chi(\, \cdot \,)=\Gamma(\, \cdot \,)\Gamma^{-1},\qquad\quad \kappa(\, \cdot \,)=C(\, \cdot \,)C^{-1}.\qquad\qquad
\end{equation}
We require these operators to be unitary, ensuring that $\rho, \chi$, and $\kappa$ are $*$-automorphisms.

Let $\{e_a\}_{1\leq a\leq 2m}$ be a $\g$-orthonormal basis of $V$ and $\g_a\defeq \g(e_a, e_a)$. Requiring the $e_a$ to be unitaries implies that the eigenspaces induced by $\rho$ and $\g$ coincide: 
\begin{equation}
	\rho(e_a)=e_a^{\dagger}=\pm e_a=\g_a e_a,
\end{equation}
making $\rho$ be the parity operator itself. For positive-definite metrics, one has $\rho(e_a)=e_a$, corresponding to the case $\tw=\mbb$.

The real operator is defined as $J:= C\circ cc$.  
Let $\hat{V}$ be the vector space obtained from $V$ by a Wick rotation on the basis (see \cite{nieuviarts2025emergence}), corresponding to a positive-definite metric.  
The space $\hat{V}$ is associated with the charge conjugation $\hat{\kappa}$ and the real operator $\JH=\CH\circ cc$.

\begin{proposition}[\cite{nieuviarts2025emergence}]
	\label{PropJK}
	We have $\kappa=\hat{\kappa}\circ\rho=\rho\circ\hat{\kappa}$ on $Cl(V, \g)$.
\end{proposition}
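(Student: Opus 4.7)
The approach is to show that both sides of each equality are complex algebra automorphisms of $Cl(V, g)$ that agree on a generating set. Using the identification $Cl(V, g) \simeq Cl(\hat{V}, \hat{g}) \simeq M_{2^m}(\mathbb{C})$ to transport $\hat{\kappa}$ to $Cl(V, g)$, it suffices to verify the two claimed identities on a $g$-orthonormal basis $\{e_a\}_{a=1}^{2m}$ of $V$ with $e_a^{\dagger} = g_a e_a$ and $g_a = g(e_a, e_a) \in \{\pm 1\}$, so that by the preceding discussion $\rho(e_a) = g_a e_a$, while $\kappa(e_a) = -\bar{e}_a$ holds by definition.

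The key step is to realize the Wick rotation explicitly on the basis by setting $\hat{e}_a := e_a$ when $g_a = +1$ and $\hat{e}_a := i\, e_a$ when $g_a = -1$, so that $\hat{e}_a^{\dagger} = \hat{e}_a$ and $\hat{e}_a^{2} = \mbb$ in both cases. The definition $\hat{\kappa}(\hat{e}_a) = -\overline{\hat{e}_a}$, combined with the $\mathbb{C}$-linearity of the algebra automorphism $\hat{\kappa}$, then yields in both cases the compact formula $\hat{\kappa}(e_a) = -g_a\, \bar{e}_a$. From here each identity reduces to a one-line computation on generators: $(\hat{\kappa} \circ \rho)(e_a) = g_a\, \hat{\kappa}(e_a) = -g_a^{2}\, \bar{e}_a = -\bar{e}_a = \kappa(e_a)$, and, using that taking the matrix complex conjugate of $e_a^{\dagger} = g_a e_a$ gives $\bar{e}_a^{\dagger} = g_a\, \bar{e}_a$, one also obtains $(\rho \circ \hat{\kappa})(e_a) = -g_a\, \bar{e}_a^{\dagger} = -\bar{e}_a = \kappa(e_a)$.

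The main subtlety I expect lies in the transport of $\hat{\kappa}$ from $Cl(\hat{V}, \hat{g})$ to $Cl(V, g)$: one must treat the Wick rotation as a $\mathbb{C}$-linear isomorphism of complex matrix algebras rather than as an $\mathbb{R}$-linear map of real Clifford algebras, so that the ambient factors of $i$ interact correctly with the matrix conjugation $\bar{\,\cdot\,}$. Once this identification is fixed, the statement follows from the elementary case analysis above, and as a consistency check the positive-definite limit ($V = \hat{V}$, $\tw = \mbb$, $\rho = \text{id}$) reduces to $\kappa = \hat{\kappa}$, as expected.
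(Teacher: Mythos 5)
Your argument is correct. Note that the paper states this proposition without proof (it is imported from the cited reference), so there is no in-text argument to compare against; your strategy --- realizing the Wick rotation explicitly as $\hat{e}_a=e_a$ or $i\,e_a$ inside the common complex matrix algebra $M_{2^m}(\mathbb{C})$, deducing $\hat{\kappa}(e_a)=-\g_a\bar{e}_a$ by $\mathbb{C}$-linearity, and then checking both composites against $\kappa$ on the generating set $\{e_a\}$ --- is the natural one, and all the computations check out, including your correct identification of the transport of $\hat{\kappa}$ as the delicate point. The only step you leave implicit is $\rho(\bar{e}_a)=\bar{e}_a^{\dagger}$: since $\rho$ is defined by $\rho(v)=v^{\dagger}$ only on $V$, this requires $\bar{e}_a\in V$, which is guaranteed by the paper's standing assertion that $\kappa\in\Aut(V)$ (equivalently, it follows from the stated commutation $\rho\circ\kappa=\kappa\circ\rho$ applied to $e_a$); a one-line remark to that effect would make the proof fully self-contained.
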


The operators $C$ and $\hat{C}$ are then related by $C=\tw\hat{C}$, allowing us to recover the twisted real operator introduced in relation~\eqref{EqTwistReal}:
\begin{equation}
	\label{EqJPseudo}
	J=\tw\JH.
\end{equation}
This provides another fundamental connection between pseudo-Riemannian structures and twisted spectral triples.

The unit sphere is defined by $\US:=\{v\in V\, \mid \, \g(v,v)=\pm 1\}$, and the spin group by
\begin{equation}
	Spin(V, \g)=\{v_1\dots v_{2k}\, \mid\, v_1, \dots , v_{2k} \in \US\}.
\end{equation}
The orthochronous spin group $Spin(V, \g)^{+}\subset Spin(V, \g)$ consists of those elements that preserve the orientation of the subspaces $V^\pm$ (spatial and temporal). The Lorentz group is a special case corresponding to metrics $\g$ with signature $(1, 3)$. 

\begin{proposition}[\cite{nieuviarts2025emergence}]
	For all $x\in Spin(V, \g)^+$, one has $x^{-1}= \rho(x^{\dagger})$.
\end{proposition}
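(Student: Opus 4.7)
My plan is to factor $x$ as a product of unit vectors, use the involutive/multiplicative structure of $\rho$ and $\dagger$ to rewrite both sides of the identity, and then reduce the claim to a parity condition on the number of timelike factors, which is precisely the orthochronous condition.

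First, I would write $x = v_1 \cdots v_{2k}$ with each $v_i \in \US$, so that $v_i^2 = \g(v_i, v_i)\mbb$ with $\g(v_i, v_i) = \pm 1$, and hence $v_i^{-1} = \g(v_i, v_i)\, v_i$. Reversing the order then gives
\[
x^{-1} = v_{2k}^{-1} \cdots v_1^{-1} = \Big(\prod_{i=1}^{2k} \g(v_i, v_i)\Big)\, v_{2k} \cdots v_1.
\]
On the other hand, since $\dagger$ is an anti-involution on $Cl(V, \g)$, $x^\dagger = v_{2k}^\dagger \cdots v_1^\dagger$. The automorphism $\rho$ restricts on $V$ to $v \mapsto v^\dagger$ and is an involution with $\rho(V) \subseteq V$, so $\rho(v_i^\dagger) = v_i$. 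Using the multiplicativity of $\rho$,
\[
\rho(x^\dagger) = \rho(v_{2k}^\dagger) \cdots \rho(v_1^\dagger) = v_{2k} \cdots v_1.
\]
Hence the identity $x^{-1} = \rho(x^\dagger)$ is equivalent to $\prod_{i=1}^{2k} \g(v_i, v_i) = 1$.

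To conclude, I would interpret this sign geometrically. Each $v_i$ acts on $V$ by twisted conjugation as a reflection through $v_i^\perp$, which reverses the orientation of $V^\pm$ and preserves that of $V^\mp$ according to whether $v_i \in V^\pm$. Letting $n^\pm$ denote the number of factors lying in $V^\pm$, the element $x$ preserves the orientations of both $V^+$ and $V^-$ — i.e.~belongs to $Spin(V, \g)^+$ — if and only if both $n^+$ and $n^-$ are even; since $n^+ + n^- = 2k$, the two parities coincide. Consequently $\prod_i \g(v_i, v_i) = (+1)^{n^+}(-1)^{n^-} = (-1)^{n^-}$ equals $+1$ precisely on the orthochronous component, which proves the claim.

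The main obstacle lies in this last step: one must check that the sign $\prod_i \g(v_i, v_i)$ is a well-defined invariant of $x$, independent of the chosen factorization into unit vectors, and that its triviality is equivalent to simultaneous preservation of the orientations of $V^+$ and $V^-$. In other words, one needs to identify this product with a spinor-norm-type homomorphism $Spin(V, \g) \to \{\pm 1\}$ whose kernel is exactly $Spin(V, \g)^+$; making this identification fully rigorous in the present sign conventions for $\g$ is the delicate point of the argument.
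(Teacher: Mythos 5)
The paper states this proposition without proof, citing \cite{nieuviarts2025emergence}, so there is no in-text argument to compare against; I can only assess your proposal on its own terms. Your algebraic reduction is correct and is the natural route: writing $x=v_1\cdots v_{2k}$, using $v_i^{-1}=\g(v_i,v_i)\,v_i$, the anti-multiplicativity of $\dagger$, and the multiplicativity of the involution $\rho$ (together with $v^\dagger\in V$, whence $\rho(v_i^\dagger)=v_i$) correctly reduces the claim to $\prod_i\g(v_i,v_i)=1$. Two remarks on the step you flag as delicate. First, well-definedness of this sign is not actually a problem: your own computation exhibits $\rho(x^\dagger)=v_{2k}\cdots v_1$ as the image of $x$ under the reversal anti-automorphism, which is intrinsic to $Cl(V,\g)$, so $x\,\rho(x^\dagger)=\bigl(\prod_i\g(v_i,v_i)\bigr)\mbb$ depends only on $x$ and not on the factorization. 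Second, the genuine remaining content is the identification of the kernel of this character with the orthochronous component, and here your phrasing needs care: a unit vector $v_i$ need not lie in $V^+$ or $V^-$, so $n^\pm$ must count factors with $\g(v_i,v_i)=\pm1$ (spacelike vs.\ timelike), and ``preserving the orientation of $V^\pm$'' must be read through the projections $V\to V^\pm$, since a reflection does not stabilize these subspaces. One then needs that $R_v$ with $\g(v,v)=+1$ reverses the orientation of the projected $V^+$-part while preserving that of the $V^-$-part (and symmetrically for negative norm); this follows from constancy of the relevant determinant sign on each unit hyperboloid (connectedness plus $R_v=R_{-v}$) and is the standard component analysis of $O(n,2m-n)$. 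With that lemma supplied or cited, your argument is complete and in fact establishes the stronger statement that $x^{-1}=\rho(x^\dagger)$ characterizes $Spin(V,\g)^+$ inside $Spin(V,\g)$.
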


As a consequence, a $Spin(V, \g)^+$-invariant inner product is given by $\inntw=\langle \, \cdot \, , \tw\, \cdot \, \rangle$.  
Since $\rho$ is $\calB(\calH)$-regular, the adjoint $\Tadj$ is unique. The relation $x^{-1}= \rho(x^{\dagger})$ then becomes
\begin{align}
	x^{-1}= x^\Tadj.
\end{align}
Thus, every element $x\in Spin(V, \g)^+$ is a $\tw$-unitary operator, providing a concrete example of such operators. In the four-dimensional case, the Lorentz group is associated with an operator $\tw$ given by the fixed (temporal) gamma matrix $e_{(0)}$, i.e. the unique element $e_{(a)}$\footnote{The fixed basis $e_{(a)}$ is defined from the basis $e_{a}$, imposing it's elements to be invariant under $\glv$, see \cite{nieuviarts2025emergence} for more informations.} such that $\g_a=1$. The corresponding $\tw$-product defines the Krein structure. 

It is important to note that from the definition of the twist, $\tw$ remains fixed under the action of $Spin(V, \g)^+$, which justifies the notation $\inntw$.  
The relation $\rho(v)=v^\dagger$ then becomes $v=v^\Tadj$ for any $v\in V$, this property is preserved under the action of $Spin(V, \g)^+$.  

Note that many of the results in this section depend explicitly on the assumption that $V$ is even-dimensional. From now on, we require that $\tw=\tw^\dagger$ and $\Gamma=\Gamma^\dagger$.

\begin{proposition}[\cite{nieuviarts2025emergence}]\label{PropSignKJG}
	Taking $\epsilon=\pm 1$ and $\epsilon^\prime=\pm 1$ as functions of $n$, we recover the relations $\tw J=\epsilon J\tw$ and $\tw \Gamma=\epsilon^\prime \Gamma\tw$.
\end{proposition}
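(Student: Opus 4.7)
The plan is to establish both identities by direct computation inside the complex Clifford algebra, exploiting the fact that $\tw$, $\Gamma$, and the linear part $C$ of $J=C\circ cc$ are all realised as specific products of the orthonormal basis vectors $\{e_a\}$ of $V$. Since $\tw$ implements the inner automorphism $\rho$, both relations will ultimately boil down to a sign count controlled by the number of negative-signature basis vectors $2m-n$, which has the same parity as $n$ because $2m$ is even.

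First I would handle $\tw\Gamma=\epsilon'\Gamma\tw$. Because $\tw$ is inner for $\rho$, this identity is equivalent to computing $\rho(\Gamma)$. Writing $\Gamma$ (up to a scalar normalisation that commutes with everything) as $e_1 e_2 \cdots e_{2m}$ and applying $\rho(e_a)=e_a^{\dagger}=\g_a e_a$ factor by factor yields
\[
\rho(\Gamma)=\prod_{a=1}^{2m}\g_a\cdot\Gamma=(-1)^{2m-n}\Gamma=(-1)^n\Gamma,
\]
so that $\tw\Gamma\tw^{-1}=(-1)^n\Gamma$ and hence $\epsilon'=(-1)^n$.

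The commutation with $J$ is more delicate because $J$ is anti-linear, so $\rho$ cannot simply be applied to it as to a Clifford element. My strategy would be first to pin down $\tw$ itself: a short case analysis based on $\tw e_b\tw^{-1}=\g_b e_b$ forces $\tw$ to be a scalar multiple of the product of the basis vectors of one signature class, with $k$ factors equal to $n$ when $n$ is odd and to $2m-n$ when $n$ is even, so that $k\equiv n\pmod 2$ in either case. Next, pushing the complex conjugation through $\tw$ gives $J\tw J^{-1}=C\,\overline{\tw}\,C^{-1}$, where $\overline{\tw}$ denotes the matrix-level complex conjugate. I would convert $\overline{\tw}$ into an intrinsic Clifford expression by applying $\kappa(e_a)=-\overline{e}_a$ factor by factor to the $k$ basis vectors in $\tw$, obtaining $\overline{\tw}=(-1)^k C\tw C^{-1}$. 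Substituting and using that $C^2$ is a scalar (hence central) collapses the expression to
\[
J\tw J^{-1}=(-1)^k\tw=(-1)^n\tw,
\]
so that $\tw J=(-1)^n J\tw$ and $\epsilon=(-1)^n$.

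The main obstacle will be the bookkeeping in this last step: correctly tracking the interplay between the matrix-level conjugation $\overline{\,\cdot\,}$ and the Clifford-algebraic automorphism $\kappa$ under the anti-linearity of $J$, and verifying that $C^2$ is indeed central so that it drops out cleanly without leftover phases. Once this is handled, both $\epsilon$ and $\epsilon'$ reduce to the same parity invariant $(-1)^n$, which makes them functions of $n$ as claimed.
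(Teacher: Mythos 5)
The paper states this proposition without reproducing a proof (it is imported from \cite{nieuviarts2025emergence}), so your argument has to be judged on its own terms and against the constraints the paper imposes on $\tw$. Your computation of $\epsilon^\prime$ is correct: $\rho$ is a $\mathbb{C}$-linear automorphism, so the normalisation of $\Gamma\propto e_1\cdots e_{2m}$ is irrelevant and $\rho(\Gamma)=\bigl(\textstyle\prod_a\g_a\bigr)\Gamma=(-1)^{2m-n}\Gamma=(-1)^n\Gamma$. Your identification of $\tw$ up to a scalar (the product of the $n$ positive basis vectors for $n$ odd, of the $2m-n$ negative ones for $n$ even, so $k\equiv n\bmod 2$) is also right.

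The gap sits exactly where you flagged the danger, in the anti-linear step, and it changes the answer. Write $\tw=\lambda\, e_{a_1}\cdots e_{a_k}$. The paper requires $\tw=\tw^\dagger$, and since $(e_{a_1}\cdots e_{a_k})^\dagger=(-1)^{k(k-1)/2}e_{a_1}\cdots e_{a_k}$ in both cases (the product of the $\g_{a_i}$ equals $+1$), the phase is forced to satisfy $\bar\lambda/\lambda=(-1)^{k(k-1)/2}$; in particular $\lambda$ must be purely imaginary when $k\equiv 2,3\pmod 4$. Because $J$ is anti-linear this phase does not drop out: $J\tw J^{-1}=C\overline{\tw}C^{-1}=(\bar\lambda/\lambda)(-1)^k\tw=(-1)^{k(k+1)/2}\tw$, not $(-1)^k\tw$ as you claim. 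Your conclusion $\epsilon=(-1)^n$ therefore fails in cases the paper itself uses: for signature $(+,+,+,-)$ one has $n=3$, $\tw=i\,e_1e_2e_3$, and a direct check gives $J\tw J^{-1}=+\tw$, i.e.\ $\epsilon=+1$, consistent with the Section~\ref{SecACTW} assertion that $(+,+,+,-)$ corresponds to $\epsilon=1$ but contradicting $(-1)^3=-1$; likewise a negative-definite plane ($n=0$, $\tw=i\,e_1e_2=\sigma_3$) gives $\epsilon=-1\neq(-1)^0$. The corrected statement is $\epsilon=(-1)^{k(k+1)/2}$ with $k$ as above, still a function of $n$ for fixed $2m$, but depending on $k\bmod 4$ rather than on $n\bmod 2$; your formula happens to agree only when $k\equiv 0,1\pmod 4$, which covers the Lorentzian case $n=1$ but not the general claim.
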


Let $\Man$ be a smooth $2m$-dimensional compact manifold endowed with a metric $\g$ of signature $(n, 2m-n)$.  
To any $v\in T_x\Man$ we associate the operator $\cl(v)$ such that
\begin{equation}
	\cl(u)\cl(v) + \cl(v)\cl(u) = 2\g_x(u, v)\mbb,\qquad \qquad \forall u, v \in T_x\Man.
\end{equation}
This allows us to define the Clifford algebra $Cl(T_x\Man, \g_x)$ and, globally, the Clifford bundle 
\[
Cl_{n,\, 2m-n}=\bigcup_{x\in\Man}Cl(T_x\Man, \g_x).
\]
The space of sections of the spinor bundle $\Sp$ is denoted by $\Gamma(\Sp)$ and is equipped with the Hermitian inner product 
\[
\inn : \Gamma(\Sp) \times \Gamma(\Sp)\to C^\infty(\Man, \mathbb{C})
\]
whose adjoint is denoted by $\dagger$.  
The corresponding fiber-wise inner product is written $\langle \psi, \psi^\prime\rangle_x:=\langle \psi(x), \psi^\prime(x)\rangle_x$, and we define $\langle \, \cdot \, , \, \cdot \, \rangle_{\twx}:=\langle \, \cdot \, , \twx\, \cdot \, \rangle_x$.

A $\g$-orthonormal basis of $T^*\Man$ is given by $\{\theta^1, \dots, \theta^{2m}\}$.  
The gamma matrices are the unitary operators $\gamma^a:= \cl(\theta^a)$, satisfying $\{\gamma^a, \gamma^b\}=2\g^{ab}\mbb$.  
The Dirac operator is then defined as the $\Tadj$-selfadjoint operator
\begin{equation}
	\KDir=i\gamma^\mu \nabla_\mu^{{\scriptscriptstyle\Sp}},\qquad\qquad
\end{equation}
where $\nabla_a^{{\scriptscriptstyle\Sp}}=\partial_a+\frac{1}{4}\Gamma^b_{\,\, a c}\gamma^c\gamma_b$ is the lift of the Levi-Civita connection to the spinor bundle.

The associated pseudo-Riemannian spectral triple is given by
\begin{equation}
(\CM,\, (L^2(\Man, \Sp),\, \inntw),\, \KDir,\, J,\, \Gamma).
\end{equation}

Let us now generalize this construction, starting from a vector space $E$ and a unitary operator $\tw=\tw^\dagger$, we define the Krein space $\calK:= (E, \langle \, \cdot \, , \, \cdot \, \rangle_\tw)$.

From \cite{barrett2007lorentzian, van2016krein}, the Dirac action is provided by the evaluation
\begin{equation}
	\label{EqrDirLagPRST}
\langle \psi , \KDir\psi \rangle_\tw.
\end{equation}
\begin{definition}[\cite{strohmaier2006noncommutative}]
	A pseudo-Riemannian spectral triple is a tuple $(\calA,\calK, \KDir)$ where $\calA$ is a unital involutive algebra acting as bounded operators on a Krein space $\calK$, equipped with a $\Tadj$-selfadjoint Dirac operator $\KDir$, so that $[\KDir ,a]$ is bounded $\forall a\in\calA$.
\end{definition}

For real pseudo-Riemannian spectral triple, extending the result of proposition \ref{PropJK} to the general case, we propose that the real operator $J$ is related to $\JH$ (as defined in section \ref{TwistedSec}) by the relation $J=\tw \JH$.
The algebraic constraints are specified by
\begin{equation}
	J^2 = \epsilon_0^\tw, \qquad\qquad\quad J \KDir = \epsilon_1^\tw \KDir J, \qquad\qquad\qquad \epsilon_0^\tw, \epsilon_1^\tw\in \{\pm 1\}.\qquad\quad
\end{equation}
Following the previous section, we say that a pseudo-Riemannian spectral triple is equipped with a $\bbZ_2$-symmetry $\Gamma$ if there exists a self-adjoint involution $\Gamma$ so that $[\Gamma, a]=0$ for any $a\in\calA$ and
\begin{equation}
	\label{EqDefStructParamExt}
	J \Gamma = \epsilon_2^\tw  \Gamma J,\qquad\qquad\quad \KDir\Gamma= \epsilon_3^\tw\Gamma\KDir, \qquad \qquad\qquad   \epsilon_2^\tw,  \epsilon_3^\tw \in \{\pm 1\}.
\end{equation}
 Following proposition \ref{PropSignKJG}, the relation between $\tw$ and $\Gamma, J$ will be fixed by the number $\epsilon, \epsilon^\prime=\pm 1$ through relation \eqref{EqRelKJG}.

Standard (Riemannian) spectral triples are recovered in the case $\tw=\bbbone$.
\newpage

\section{$\tw$-Morphism and Signature Change}
\label{SecMorphism}
The previous sections have presented two distinct extensions of the notion of a spectral triple. The only parameter for these extensions is the operator $\tw$, both the twisted and the pseudo-Riemannian spectral triples reduce to a spectral triple in the special case $\tw=\bbbone$. The following section introduces a morphism that allows one to relate these two extensions. We then show how this morphism acts as a signature change in the manifold case, and how it is implemented through a spacelike reflection.

The $\tw$-morphism $\twphi$ is defined as a bijective, involutive morphism parametrized by the operator $\tw$.  
It allows the transition 
\begin{equation}
(\calA,\, \calH,\, \Dir,\, J,\, \Gamma,\, \tw)\, \xleftrightarrow{\twphi}\,(\calA,\, \calK,\,\KDir,\, J,\, \Gamma)
\end{equation}
between twisted and pseudo-Riemannian spectral triples through the following action:
\begin{align}
	&\Dir\,\xrightarrow{\twphi}\, \KDir:= \tw \Dir,\\
	&[\Dir, a]_\rho\,\xrightarrow{\twphi}\, [\KDir, a]=\tw[\Dir, a]_\rho,\\
	&\calH:= (E,\, \langle \, \cdot \, , \, \cdot \, \rangle)\, \xrightarrow{\twphi}\, \calK:=  (E,\, \langle \, \cdot \, , \, \cdot \, \rangle_\tw= \langle\, \cdot \, , \tw \, \cdot \, \rangle).
\end{align}

With such a relation, the statements $\Dir^\dagger =\Dir$ and $(\KDir)^\Tadj=\KDir$ are equivalent, and the corresponding derivations are related by $[\Dir, a]_\rho =\tw [\KDir, a]$.
Similarly, the first-order conditions satisfy
\begin{equation}
[[\Dir, a]_\rho, b^{\circ} ]_{\rho^\circ}= \tw [[\KDir, a], b^{\circ}]=0.
\end{equation}
Ensuring their mutual implication.

The fluctuations are related by $\KDir_{A^\tw}=\tw\Dir_{A_\rho}$, where
\begin{align}
	\KDir_{A^\tw}=\KDir+A^\tw+\epsilon_1^\tw J A^{\,\tw} J^{-1}\qquad\text{and} \qquad \Dir_{A_\rho}=\Dir+A_\rho+ \epsilon_1 JA_\rho J^{-1}.\qquad
\end{align}

Fluctuations preserving the $\dagger$-selfadjointness of $\Dir$ and the $\Tadj$-selfadjointness of $\KDir$ are related through the $\tw$-unitary operator $U_\tw=u_\tw Ju_\tw J^{-1}$, with $u_\tw\in\calU_\tw (\algA)$
\begin{equation}
	\label{RelGaugeTF}
	\KDir_{A^\tw}:= U_\tw\KDir U_\tw^{\Tadj}= \tw V_\tw \Dir V_\tw^{\dagger} := \tw \Dir_{A_\rho},\qquad\qquad
\end{equation}
where $V_\tw=\rho(U_\tw)=\rho(u_\tw) J\rho(u _\tw) J^{-1}$ is also a $\tw$-unitary operator.
 
We then obtain the following relations between the sign parameters:
\begin{equation}
\epsilon_0=\epsilon_0^\tw, \qquad\qquad \epsilon_1^\tw=\epsilon\epsilon_1, \qquad\qquad \epsilon_2^\tw=\epsilon_2, \qquad\qquad \epsilon_3^\tw=\epsilon^\prime\epsilon_3.
\end{equation}
From now, we impose $\{\KDir, \Gamma\}=0$, so that $\Gamma$ is a grading for the pseudo-Riemannian spectral triples. This leads to the following relation for $(\calA,\, \calH,\, \Dir,\, J,\, \Gamma,\, \tw)$
\begin{equation}
	\{\Dir, \Gamma\}_\rho:= \Dir\Gamma+\rho(\Gamma)\Dir=\Dir\Gamma+\epsilon^\prime\Gamma\Dir=0.\qquad\qquad
\end{equation}
The $\tw$-morphism thus establishes a canonical correspondence between twisted and pseudo-Riemannian spectral triples. It preserves the essential structural properties, and connects generalized spectral triples through a bijective and involutive map. In this precise sense, the term morphism refers to a structure-preserving correspondence, which in particular preserves key spectral invariants such as the spectral and fermionic actions. The invariance of the real structure $J$ under the $\tw$-morphism is natural, as this operator acquires a consistent meaning in both frameworks, as shown by the equivalent equations~\eqref{EqTwistReal} and~\eqref{EqJPseudo}. Further details can be found in \cite{nieuvsignchange}.

Consider now the $2m$-dimensional compact pseudo-Riemannian manifold $(\Man,\g)$.  
Given $(\Man, \g)$, a spacelike reflection $r$ is defined to be a reflection\footnote{A reflection for $(\Man,\g)$ is an isometric automorphism $r$ on $\tm$ such that $r^2=\bbbone$.} with respect to the metric $\g$ such that the following metric is positive-definite on $\tm$
\begin{equation}
\gr(\, \cdot \,,\, \cdot \, ):= \g(\, \cdot \,,\,r \, \cdot\, ).
\end{equation}

Using the twist $\rho(\, \cdot \,)=\tw(\, \cdot \,)\tw$ defined in Section~\ref{SecPseudoRiemST}, we have for all $v\in \tm$:
\begin{equation}
	\rho(\cl(v))=\cl(rv)\qquad \qquad\qquad \text{where $r$ is spacelike}.\qquad\qquad
\end{equation}
That is $\rho$ is the parity operator (a spacelike coordinate reversal), consistent with the unitary cnhoice of Clifford basis, providing a clear geometric interpretation for the twist.

This relation implies that the metric relation $\gr(\, \cdot \,,\, \cdot \, ):= \g(\, \cdot \,,\,r \, \cdot\, )$ can be expressed at the level of the Clifford algebra as
\begin{equation}
	\label{EqRelMetCl}
	2\gr(u, v)\mbb=2\g(u, rv)\mbb=\{\cl(u), \cl(rv)\}=\{\cl(u), \rho(\cl(v))\}.
\end{equation}

Now, starting from the associated pseudo-Riemannian spectral triple $(\CM,\, \calK=(L^2(\Man, \Sp),\, \inntw),\, \KDir,\, J,\, \Gamma)$ with $\epsilon_3^\tw=-1$,  
we define $\Dir=\tw\KDir$ within the twisted spectral triple $(\CM,\, \calH,\,\Dir,\, J,\, \Gamma,\, \tw)$ obtained through the action of the $\tw$-morphism on $(\CM,\, \calK=(L^2(\Man, \Sp),\, \inntw),\, \KDir,\, J,\, \Gamma)$. 

In this particular case, we have $[\Dir, a]_\rho=[\Dir, a]$, as the twist $\rho$ acts trivially on $\CM$. We emphasize that $(\CM,\, \calH,\,\Dir,\, J,\, \Gamma,\, \tw)$ remains a twisted spectral triple by definition, since the twisted commutator notation is still applicable. The information of the twist remains of structural significance, as it is encoded within the operators $\Dir$ and $J$, which differ from those used in the standard spectral triple approach. The relation between $\rho$ and $\Dir$ will be highlighted in the following.

In NCG, the metric properties are encoded in the Connes distance formula (see \cite{connes1996gravity}), where the Clifford relations appear in the computation, encapsulating the metric.

\begin{proposition}[\cite{nieuvsignchange}]
	\label{PropDist}
	The twisted spectral triple is associated with the metric $\gr$.
\end{proposition}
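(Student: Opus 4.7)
The plan is to invoke the Connes distance formula on the twisted spectral triple $(\CM,\calH,\Dir,J,\Gamma,\tw)$ with $\Dir=\tw\KDir$, and to show that the metric data extracted from $[\Dir,f]_\rho$ is controlled by $\gr$ rather than by the original signature-$(n,2m-n)$ metric $\g$.

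I would begin by observing that for $f\in\CM$ acting on spinors by pointwise scalar multiplication, $f$ commutes with every Clifford element, so $\rho(f)=\tw f\tw^{-1}=f$. The twisted commutator therefore collapses to the ordinary one:
\[
[\Dir,f]_\rho=[\Dir,f]=\tw[\KDir,f]=i\tw\,\cl(df),
\]
the spin-connection piece dropping out against the scalar $f$. I would then compute the adjoint product on the Hilbert space $\calH$:
\[
[\Dir,f]_\rho^\dagger\,[\Dir,f]_\rho=\cl(df)^\dagger\,\tw^\dagger\tw\,\cl(df)=\cl(rdf)\,\cl(df),
\]
using $\tw=\tw^\dagger$, $\tw^2=\mbb$, and $\cl(v)^\dagger=\rho(\cl(v))=\cl(rv)$. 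Symmetrizing and invoking \eqref{EqRelMetCl} gives
\[
\tfrac{1}{2}\{\cl(rdf),\cl(df)\}=\g(rdf,df)\,\mbb=\gr(df,df)\,\mbb,
\]
so that the Clifford identity governing the norm of $[\Dir,f]_\rho$ encodes precisely the Riemannian metric $\gr$. Substituting into Connes' distance formula would then produce the Riemannian geodesic distance of $\gr$ on $\Man$.

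The main technical obstacle is the antisymmetric residue $[\cl((df)^+),\cl((df)^-)]$ that appears in $\cl(rdf)\,\cl(df)$ alongside the scalar $\gr(df,df)\,\mbb$: the operator norm is not \emph{a priori} equal to $\sqrt{\gr(df,df)}$ pointwise. Overcoming it requires either restricting to the $\rho$-symmetric part of the product, where the identification with $\gr$ is manifest, or showing that taking the supremum over $f\in\CM$ in Connes' formula effectively localises onto pure gradient directions and washes out the antisymmetric contribution. The essential algebraic input in either route is the same: the twist $\rho$ converts the Clifford relation for $\g$ into the Clifford relation for $\gr$, which is the core content of the proposition.
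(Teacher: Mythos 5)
Your route is the same as the paper's: invoke the Connes distance formula, note that $\rho$ acts trivially on $\CM$ so that $[\Dir,a]_\rho=[\Dir,a]$, reduce $\|[\Dir,a]\|^2$ to a pointwise Clifford computation, and read off $\gr$ from $\{\cl(u),\rho(\cl(v))\}=2\g(u,rv)\mbb=2\gr(u,v)\mbb$. The one substantive remark concerns the obstacle you flag at the end. The paper's proof writes $\|[\Dir,a]\|^2=\sup_{\|\psi\|=1}\langle\big(\g(grad\,a,\,r\,grad\,\bar a)\big)\psi,\psi\rangle$ and concludes $\|[\Dir,a]_\rho\|^2=\|grad\,a\|_\infty^2$; that is, it identifies $[\Dir,a]^\dagger[\Dir,a]=\cl(r\,d\bar a)\,\cl(da)$ with its scalar (anticommutator) part $\gr(d\bar a,da)\,\mbb$ and says nothing about the antisymmetric remainder. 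So the ``main technical obstacle'' you isolate is not a defect of your write-up relative to the source: the paper's proof silently performs exactly the symmetrization you describe as ``restricting to the $\rho$-symmetric part.'' Your observation is genuine --- for real $a$ with gradient having both spacelike and timelike components $u,v$, the operator $\cl(r\,da)\cl(da)$ has extreme eigenvalue $(\|u\|+\|v\|)^2$ rather than $\|u\|^2+\|v\|^2=\gr(da,da)$, so the pointwise operator norm does not literally equal the $\gr$-length of the gradient. To close the argument one would have to show that this discrepancy does not affect the supremum in the distance formula (e.g.\ that the extremal functions can be chosen with gradient lying in a single eigenspace of $r$ along the minimizing curve); neither your proposal nor the paper carries this step out, so on that point you are exactly as complete as the original, and more candid about it.
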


\begin{proof}
	Using the Connes distance formula, the only modification in the twisted context arises at the level of the derivation:
	\begin{align}
		d(x,y)=\sup_{a\in \CM}\{ |a(x)-a(y)|\, \mid\, \| [\Dir, a]_\rho  \|\leq 1 \}.
	\end{align}
	Computing $\| [\Dir, a]_\rho  \|^2=\| [\Dir, a]  \|^2$ gives
	\begin{align}
		\| [\Dir, a]  \|^2&=\sup_{\substack{\psi\in\calH\\  \, \|\psi \| =1}}\{ \langle \left(\g(grad\, a, r grad\, {\bar a}) \right)\psi , \psi \rangle  \} 
	\end{align}
	where we recognize $\gr(grad\, a, grad\, {\bar a})$ so that $\| [\Dir, a]_\rho  \|^2= \|grad\, a\|_\infty^2$. This implies that $d(x,y)=d_{\gr}(x,y)$ i.e., that $\gr$ is the metric associated with the twisted spectral triple.
\end{proof}

The $\tw$-morphism therefore acts on the metric by changing its signature:
\begin{equation}
	\g(\, \cdot \,,\, \cdot \, )\,\xrightarrow{\twphi}\, \gr(\, \cdot \,,\, \cdot \, )= \g(\, \cdot \,,\,r\, \cdot\, ).\qquad
\end{equation}

The action on the Clifford representation is given by
\begin{equation}
	\label{EqtwActionCliff}
	\cl(u)\xrightarrow{\twphi}\tilde{\cl}(u)\defeq \tw\,\cl(u),
\end{equation}
where the representation $\tilde{\cl}$ is related to the metric $\gr$ through
\begin{equation}
	\label{EqDefCliffTw}
	2\gr(u,v)\mbb\ =\ \rho\,\!\big(\tilde{\cl}(u)\tilde{\cl}(v)\big)+\tilde{\cl}(v)\tilde{\cl}(u),
\end{equation}
which follows from equation~\eqref{EqRelMetCl} together with the result of proposition~\ref{PropDist}.  
This can be viewed as a twisted version of the usual Clifford relation, see \cite{nieuviarts2025emergence} for an abstract definition of twisted Clifford relations.

The metric can also be connected to the Clifford representation via the normalized trace of the product $\cl(u)\cl(v)$.  
The signature change induced by the $\tw$-morphism action~\eqref{EqtwActionCliff} is also manifest in this relation:
\begin{equation}
	\label{EqMetTrace}
	\g(u,v)=\frac{1}{2^m}\Tr\!\big(\cl(u)\cl(v)\big)\ \xrightarrow{\twphi}\ \frac{1}{2^m}\Tr\!\big(\tcl(u)\tcl(v)\big)=\frac{1}{2^m}\Tr\!\big(\cl(ru)\cl(v)\big)=\gr(u,v),
\end{equation}
showing that the $\tw$-morphism acts as a signature-changing transformation at the level of the Clifford representation. 

A Laplace-type operator with respect to $\gr$ is provided by
\begin{equation}
\frac{1}{2}(\rho(\Dir\Dir^\dagger)+\Dir\Dir^\dagger).
\end{equation}
This rewrite as
\begin{equation}
 \frac{1}{2}(\KDir(\KDir)^\dagger+(\KDir)^\dagger\KDir)= \frac{1}{2}((\tw\KDir)^2+(\KDir\tw)^2),
\end{equation}
where we recover the Laplace-type operator proposed in \cite{strohmaier2006noncommutative} on the rhs, an the usual laplacian formula when $\tw=\mbb$.

Let us now examine the effect of the $\tw$-morphism on the Christoffel symbols. Starting from $\KDir$, we define $\tgm^\mu=\tw\gamma^\mu$ so that $\Dir=-i\,\tgm^\mu\tilde\nabla_\mu^{\scriptscriptstyle R,\Sp}$, where
\begin{equation}
	\label{EqParticGamma}
	\tilde\nabla_\mu^{\scriptscriptstyle R,\Sp}=\partial_\mu+\frac{1}{4}\tilde\Gamma^{b}{}_{\mu a}\,\tgm^a\tgm_b
\end{equation}
is the associated spin connection, explicitly written in terms of the representation $\tilde{\cl}$.  
The coefficients $\tilde\Gamma^{b}{}_{\mu a}$ are defined from the $\Gamma^{b}{}_{\mu a}$ by the relation $\Gamma^{b}{}_{\mu a}=\g^{a}\g^{b}\tilde\Gamma^{b}{}_{\mu a}$.

The metric compatibility condition $\nabla_\nu\g_{\mu\kappa}=0$ for $\g$ implies
\[
\Gamma^\lambda{}_{\mu\nu}=\tfrac12\,\g^{\lambda\kappa}\big(\partial_\mu\g_{\nu\kappa}+\partial_\nu\g_{\mu\kappa}-\partial_\kappa\g_{\mu\nu}\big),
\]
and similarly for $\gr$, with $\Gamma^\lambda_{{\scriptscriptstyle R}\,\mu\nu}$ obtained from $\nabla^{\scriptscriptstyle R}_\nu\g^{\scriptscriptstyle R}_{\mu\kappa}=0$.

The reflected bases for $T^*\Man$ and $T\Man$ are defined by
\begin{equation}
	\label{EqReflect}
	dx^{r\mu}\defeq r\delta^\mu_\nu\,dx^\nu,\qquad\qquad \partial_{r\mu}\defeq r\delta_\mu^\nu\,\partial_\nu.
\end{equation}
If $\det(r)=-1$, the orientation of the basis is reversed.  
The map $dx^\mu\mapsto dx^{r\mu}$ is isometric for both $\g$ and $\gr$.  
Requiring integrability of $dx^{r\mu}$ yields coordinates $x^{r\mu}$ satisfying $dx^{r\mu}=d(x^{r\mu})$. In that case we have
\begin{align}
	dx^{r\mu}=\frac{\partial x^{r\mu}}{\partial x^\nu}dx^\nu=r\delta^\mu_\nu dx^\nu.
\end{align}
The dual basis is given by $\partial_{r\mu}=r\delta_\mu^\nu \partial_\nu$, and the corresponding Christoffel symbols satisfy
\begin{align}
	\nabla_\mu\partial_{r\nu}=\Gamma^{r\lambda}_{\,\, \mu r\nu}\partial_{r\lambda}.
\end{align}

\begin{proposition}[\cite{nieuviarts2025emergence}]
	The Christoffel symbols $\Gamma^\lambda_{\,\, \mu \nu}$ and $\Gamma^\lambda_{{{\scriptscriptstyle R}}\, \mu \nu}$ for $\g$ and $\gr$ are related by
	\begin{align}
		\label{RelatChristos}
		\Gamma^{r\lambda}_{\,\, \mu r\nu}=\Gamma^\lambda_{{{\scriptscriptstyle R}}\, \mu \nu}+\frac{1}{2}\gr^{\lambda\kappa}(\partial_{r\nu}\g_{\mu\kappa}-\partial_{\nu}\g^{\scriptscriptstyle R}_{\mu\kappa}).
	\end{align}
\end{proposition}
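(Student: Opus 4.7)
The plan is to reduce the identity to the Koszul formulas for $\g$ and $\gr$, coupled with the algebraic relations induced by the reflection $r$. Introducing the reflection matrix $R^{\alpha}{}_{\nu}:=r\delta^\alpha_\nu$, I would first exploit the chart provided by the integrability requirement $dx^{r\mu}=d(x^{r\mu})$: in such coordinates the components of $R$ are constant and therefore commute with every partial derivative. Unfolding the defining relation $\nabla_\mu\partial_{r\nu}=\Gamma^{r\lambda}_{\,\,\mu r\nu}\partial_{r\lambda}$ via $\partial_{r\nu}=R^\alpha{}_\nu\partial_\alpha$, pulling the constant $R$ out of $\nabla_\mu$, and re-expanding the result through $\partial_\beta=R^\gamma{}_\beta\partial_{r\gamma}$ (valid since $R^2=\bbbone$), I obtain the direct identification
\begin{equation*}
\Gamma^{r\lambda}_{\,\,\mu r\nu}=R^\lambda{}_\beta\,R^\alpha{}_\nu\,\Gamma^\beta{}_{\mu\alpha}.
\end{equation*}

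Second, I would substitute the standard Koszul expression $\Gamma^\beta{}_{\mu\alpha}=\tfrac12\g^{\beta\kappa}(\partial_\mu\g_{\alpha\kappa}+\partial_\alpha\g_{\mu\kappa}-\partial_\kappa\g_{\mu\alpha})$ and apply the three elementary identities
\begin{equation*}
R^\lambda{}_\beta\,\g^{\beta\kappa}=\gr^{\lambda\kappa},\qquad R^\alpha{}_\nu\,\g_{\alpha\kappa}=\gr_{\nu\kappa},\qquad R^\alpha{}_\nu\,\partial_\alpha=\partial_{r\nu},
\end{equation*}
all of which follow directly from $\gr(\cdot,\cdot)=\g(\cdot,r\cdot)$ together with the constancy of $R$. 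The three Koszul terms then yield, in order, $\partial_\mu\gr_{\nu\kappa}$, $\partial_{r\nu}\g_{\mu\kappa}$, and $\partial_\kappa\gr_{\mu\nu}$, each contracted against $\tfrac12\gr^{\lambda\kappa}$. Only the middle term retains an unreflected $\g$; this asymmetry is the source of the correction in the proposition.

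Third, I would recognize the Koszul formula $\Gamma^\lambda_{{\scriptscriptstyle R}\,\mu\nu}=\tfrac12\gr^{\lambda\kappa}(\partial_\mu\gr_{\nu\kappa}+\partial_\nu\gr_{\mu\kappa}-\partial_\kappa\gr_{\mu\nu})$ inside the expression obtained above, which is straightforward after adding and subtracting $\tfrac12\gr^{\lambda\kappa}\partial_\nu\gr_{\mu\kappa}$ to match the middle slot. The result splits cleanly as $\Gamma^\lambda_{{\scriptscriptstyle R}\,\mu\nu}+\tfrac12\gr^{\lambda\kappa}\bigl(\partial_{r\nu}\g_{\mu\kappa}-\partial_\nu\gr_{\mu\kappa}\bigr)$, which is precisely~\eqref{RelatChristos}.

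The calculation is essentially index bookkeeping, so I do not expect a deep conceptual obstacle; the one genuinely delicate point is justifying that $R$ has constant components in the chart where the computation is performed. This is exactly the role of the integrability hypothesis $dx^{r\mu}=d(x^{r\mu})$: without it, derivatives of $R^\beta{}_\nu$ would contribute and the transport of $R$ through $\nabla_\mu$ and $\partial_\kappa$ used in both the first and second steps would no longer be permissible. Once this chart is fixed, everything reduces to algebraic manipulation of the Koszul formula.
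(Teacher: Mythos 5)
The paper itself does not reproduce a proof of this proposition (it is deferred to the cited reference), so your argument can only be judged on its own terms: the computation is correct and is surely the intended one. Writing $R^{\alpha}{}_{\nu}=r\delta^{\alpha}_{\nu}$, your identity $\Gamma^{r\lambda}{}_{\mu r\nu}=R^{\lambda}{}_{\beta}R^{\alpha}{}_{\nu}\Gamma^{\beta}{}_{\mu\alpha}$, followed by the Koszul substitution with $R^{\lambda}{}_{\beta}\g^{\beta\kappa}=\gr^{\lambda\kappa}$ and $R^{\alpha}{}_{\nu}\g_{\alpha\kappa}=\g^{\scriptscriptstyle R}_{\nu\kappa}$ (the latter uses that $r$ is $\g$-symmetric, which indeed follows from $r$ being a $\g$-isometry with $r^{2}=\bbbone$), reproduces \eqref{RelatChristos} exactly, with only the middle Koszul slot left unreflected. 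The one point to sharpen is your justification of the constancy of $R$: exactness of $dx^{r\mu}=R^{\mu}{}_{\nu}\,dx^{\nu}$ gives only the curl condition $\partial_{[\kappa}R^{\mu}{}_{\nu]}=0$, not $\partial_{\kappa}R^{\mu}{}_{\nu}=0$, and if $R$ is not constant the right-hand side of \eqref{RelatChristos} acquires the additional terms $\tfrac12 R^{\lambda}{}_{\alpha}\partial_{\mu}R^{\alpha}{}_{\nu}+\tfrac12\gr^{\lambda\kappa}(\partial_{\kappa}R^{\alpha}{}_{\nu})\g_{\mu\alpha}$, so the proposition itself implicitly assumes the stronger hypothesis. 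Since the surrounding text works in a frame where $r$ is diagonal with entries $\pm1$, that stronger hypothesis is clearly what is intended, and under it your proof is complete.
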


This result shows how the $\tw$-morphism acts on Christoffel symbols, connecting $\Gamma^\lambda_{\,\, \mu \nu}$ and $\Gamma^\lambda_{{{\scriptscriptstyle R}}\, \mu \nu}$ in a way that preserves their reality, in contrast with Wick rotations which introduce complex values during the signature change process.

The $\g$-orthonormal bases are also $\gr$-orthonormal, so the same vielbeins $e_a{}^{\mu}$ serve both metrics and relate the reflected bases by $dx^{ra}=e^{a}{}_{\mu}\,dx^{r\mu}$.  
Since $dx^{ra}=rdx^a=\g^{a}dx^a$, one has $\partial_{ra}=\delta^b_a\,\g_{b}\,\partial_b$.

From this and from relation~\eqref{EqDefCliffTw}, we deduce the generalized Clifford relation:
\begin{equation}
	\label{EqCliffGeneralisé}
	2\gr^{ab}\mbb\ =\ \tgm^a\tgm^b+s_{ab}\,\tgm^b\tgm^a,\qquad s_{ab}\defeq \g^{a}\g^{b}\in\{\pm1\}.
\end{equation}
Usual Clifford relations are recovered in the case $s_{ab}=1$.

As a consequence, in the reflected frame we have
\begin{equation}
	\label{EqRewritTGamma}
	\Gamma^{rb}{}_{\mu ra}=g^{b}g_{a}\,\Gamma^{b}{}_{\mu a}+g^{b}\partial_\mu g_{a}
	\ =\ \tilde{\Gamma}^{b}{}_{\mu a}+g^{b}\partial_\mu g_{a},
\end{equation}
while the vielbein relation extends to
\begin{equation}
	\label{EqRelatGammVielb}
	\Gamma^{rb}{}_{\mu ra}=e_a{}^{\nu}\,\Gamma^{r\lambda}{}_{\mu r\nu}\,e^b{}_{\lambda}-e_a{}^{\nu}\partial_\mu e^b{}_{\nu}.
\end{equation}
All these results lead to the following important result.

\begin{proposition}[\cite{nieuviarts2025emergence}]
	The Dirac operator in the twisted spectral triple takes the form
	\begin{equation}
		\Dir=-i \tgm^\mu\tilde\nabla_\mu^{{\scriptscriptstyle R, \Sp}}
	\end{equation}
	where $\tilde\nabla_\mu^{{\scriptscriptstyle R, \Sp}}$, the twisted version of the lift of  $\nabla_\mu^{{\scriptscriptstyle R}}$ to the spinor bundle, is given by
	\begin{equation}
		\label{EqDefDir}
		\tilde\nabla_\mu^{{\scriptscriptstyle R, \Sp}}=\partial_\mu+\frac{1}{4}\Gamma^b_{{{\scriptscriptstyle R}}\, \mu a}\tgm^a\tgm_b+\frac{1}{4}K^b_{\,\, \mu a}\tgm^a\tgm_b
	\end{equation}
	where $K^b_{\,\, \mu a}=g^b(\frac{1}{2}\g^{b\kappa}(\partial_{ra}\g_{\mu\kappa}-\partial_{a}\g^{\scriptscriptstyle R}_{\mu\kappa})-\partial_\mu g_a)$.
\end{proposition}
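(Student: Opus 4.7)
The plan is to establish the claimed form of $\Dir$ by unfolding the $\tw$-morphism relation $\Dir=\tw\KDir$, substituting the pseudo-Riemannian expression $\KDir=i\gamma^\mu\nabla_\mu^{{\scriptscriptstyle\Sp}}$ with $\nabla_\mu^{{\scriptscriptstyle\Sp}}=\partial_\mu+\tfrac14\Gamma^{b}{}_{\mu a}\gamma^a\gamma_b$, and reorganising the resulting connection into a Levi-Civita piece for $\gr$ plus a correction that must turn out to be $K^b{}_{\mu a}\tgm^a\tgm_b/4$.

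First, I would push the twist through the Clifford letters. Setting $\tgm^\mu=\tw\gamma^\mu$ and using $\tw^2=\bbbone$, every $\gamma^\mu$ appearing in $\tw\KDir$ can be converted to $\tgm^\mu$ and every quadratic factor $\gamma^a\gamma_b$ to $\tgm^a\tgm_b$, up to the rescaling $\Gamma^{b}{}_{\mu a}=g^ag^b\tilde\Gamma^{b}{}_{\mu a}$ introduced after~\eqref{EqParticGamma}. Collecting signs yields the provisional expression $\Dir=-i\,\tgm^\mu\bigl(\partial_\mu+\tfrac14\tilde\Gamma^{b}{}_{\mu a}\tgm^a\tgm_b\bigr)$, which is precisely the form~\eqref{EqParticGamma}. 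The remaining task is therefore to identify $\tilde\Gamma^{b}{}_{\mu a}$ as the spin connection coefficient of $\gr$ plus the stated correction.

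Second, I would invoke the two algebraic identities already established in this section. Equation~\eqref{EqRewritTGamma} gives directly
\[
\tilde\Gamma^{b}{}_{\mu a}=\Gamma^{rb}{}_{\mu ra}-g^{b}\partial_\mu g_{a}.
\]
Applying the vielbein conversion~\eqref{EqRelatGammVielb} transports $\Gamma^{rb}{}_{\mu ra}$ to the world-index Christoffel $\Gamma^{r\lambda}{}_{\mu r\nu}$, which \eqref{RelatChristos} splits into the Levi-Civita symbol $\Gamma^{\lambda}_{{\scriptscriptstyle R}\,\mu\nu}$ of $\gr$ plus a metric-derivative remainder. Pulling back to frame indices through the common $\g$-/$\gr$-orthonormal vielbein and the reflected-basis dictionary $\partial_{ra}=\delta_a^\nu g_\nu\partial_\nu$ of~\eqref{EqReflect}, the Levi-Civita piece assembles into the spin-connection coefficient $\Gamma^{b}_{{\scriptscriptstyle R}\,\mu a}$, while the remainders collect into $\tfrac12\g^{b\kappa}\bigl(\partial_{ra}\g_{\mu\kappa}-\partial_{a}\g^{{\scriptscriptstyle R}}_{\mu\kappa}\bigr)-g^{b}\partial_\mu g_{a}$. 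Factoring the overall $g^b$ reproduces exactly the stated $K^b{}_{\mu a}$, and substituting back into the provisional expression completes the identification.

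The main obstacle will be sign- and index-bookkeeping at three distinct layers: the signs $g^a=\pm1$ that enter both the rescaling $\Gamma^b{}_{\mu a}=g^ag^b\tilde\Gamma^b{}_{\mu a}$ and the reflected bases~\eqref{EqReflect}; the frame-to-world passage when combining~\eqref{RelatChristos} with~\eqref{EqRelatGammVielb} (crucially using that $\g$- and $\gr$-orthonormal frames coincide, so no conformal factor appears); and the overall placement of $i$ in $\Dir=\tw\KDir$ that has to land on the stated $-i$. Once these conventions are pinned down, the rearrangement is essentially a direct substitution of the previously proved identities into the expansion of $\tw\gamma^\mu\nabla_\mu^{{\scriptscriptstyle\Sp}}$.
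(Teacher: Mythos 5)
Your proposal follows essentially the same route the paper indicates: starting from the form~\eqref{EqParticGamma} and substituting the chain \eqref{EqRewritTGamma} $\to$ \eqref{EqRelatGammVielb} $\to$ \eqref{RelatChristos} to decompose $\tilde\Gamma^{b}{}_{\mu a}$ into the spin connection of $\gr$ plus the correction $K^{b}{}_{\mu a}$, which is exactly what the text means by ``all these results lead to the following important result.'' The only caveat is the one you already flag: in the frame-index pull-back the factor $\gr^{\lambda\kappa}e^{b}{}_{\lambda}=\g^{b}\g^{b\kappa}$ is what supplies the overall $g^{b}$ in the stated $K^{b}{}_{\mu a}$, so the sign bookkeeping works out as you anticipate.
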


This provides an intrinsic expression of $\Dir$ in terms of the $\tgm^a$, the Christoffel symbols $\Gamma^{b}_{{\scriptscriptstyle R}\,\mu a}$, and the terms $K^{b}{}_{\mu a}$, each directly associated with the metric $\gr$ and $r$. The $\tw$-morphism then acts as a local signature change\footnote{Because the action of $r$ is local.}, analogous to a Wick rotation but without introducing any complex structure. Since the construction is carried out in the compact setting, this transformation should be understood at the level of the Clifford and Dirac data, rather than as a full pseudo-Riemannian space-time construction endowed with a global causal structure. Further details can be found in \cite{nieuviarts2025emergence}.
 \newpage

\section{Almost-Commutative Twisted Spectral Triples}
\label{SecACTW}

 Section \ref{SecMorphism}  presents how starting from a pseudo-Riemannian spectral triple, the $\tw$-morphism provide a twisted spectral triple associated with a Riemannian metric. The present section proposes a reversed "top-down" approach where a twisted spectral triple is taken as the primary object, on which we build an almost-commutative spectral triple structure. This section contains the main result of this proceeding. It shows how a pseudo-Riemannian spectral triple structure emerges within this almost-commutative twisted spectral triple. We focus on twisted spectral triples whose axioms correspond, through the $\tw$-morphism, to those of a pseudo-Riemannian spectral triple of KO-dimension~6, motivated by the fact that this correspond to our space-time case\footnote{Taking a metric with signature $(n, 2m-n)$, the KO dimension is the number $2(n-m) \mod  8$, which give a KO dimension $6$ in signature $(1, 3)$. More informations can be found in \cite{nieuviarts2025emergence}.}. The finite part also has KO-dimension~6, the motivation for this choice is discussed in \cite{barrett2007lorentzian}, in connection with the issues of fermion doubling and metric signature in the noncommutative Standard Model.
 
In KO-dimension $6$, the axioms of any even real spectral triple, Riemannian or pseudo-Riemannian, are encoded by the signs $\tilde{\epsilon}_0=1, \tilde{\epsilon}_1=1,\tilde{\epsilon}_2=-1, \tilde{\epsilon}_3=-1$. Motivated by the $\tw$-morphism correspondence, but without assuming any underlying pseudo-Riemannian spectral triple data, we consider twisted spectral triples $(\CM,\, \calH,\, \Dir,\, J,\, \Gamma,\, \tw)$  whose sign data match these axioms at the purely algebraic level
\begin{equation}
	\epsilon_0=\tilde{\epsilon}_0=1,\quad\qquad \epsilon_1=\epsilon\tilde{\epsilon}_1=\epsilon,\qquad\quad \epsilon_2=\tilde{\epsilon}_2=-1,\qquad\quad \epsilon_3=\epsilon^\prime\tilde{\epsilon}_3=-\epsilon^\prime,
\end{equation}
where the parameters $\epsilon$ and $\epsilon^\prime$ are left unspecified. The finite part is given by the finite-dimensional algebra $\AF$, within the KO-dimension~6 spectral triple $(\AF,\, \HF,\, \FDir,\, \JF,\,\GF)$, in the spirit of the noncommutative Standard Model. Defining $\calA_p:=\CM\otimes\AF$, the product spectral triple is $
(\calA_p,\, \calH_p,\, \Dir_p,\, J_p, \,\Gamma_p,\, \tw_p)$
with elements given by
\begin{equation}
	\calH_p:=\calH\otimes \HF\qquad\qquad J_p:= J\otimes \JF\qquad\qquad \Gamma_p:= \Gamma\otimes \GF \qquad\qquad \rho_{p}:=\rho\otimes \fbb.
\end{equation}
For the Dirac operator, we propose the following general form
\begin{equation}
\Dir_p:= \Dir \otimes \fbb + \calO\otimes \FDir, 
\end{equation} 
where $\calO$ is an operator that will be specified below. The square of $\Dir_p$ gives
\begin{equation}
\Dir_p^2=\Dir^2 \otimes \fbb+ \calO^2\otimes \FDir^2+\{\calO, \Dir\}\otimes \FDir.
\end{equation}
In Almost-commutative geometry, the usual approach imposes $\calO^2=\mbb$ and $\{\calO,\Dir\}=0$ to recover the additive relation $\Dir_p^2=\Dir^2\otimes \fbb+\mbb\otimes \FDir^2$,
which is the spectral counterpart of a metric product, see \cite{vanhecke1999product}. In the present setting, we do not impose $\{\calO,\Dir\}=0$ \emph{a priori}. The product remains topologically cartesian through the algebra $\CM\otimes\AF$, while its metric content is no longer additive in the usual sense.

 The algebraic constraints are
\begin{equation}
	J_p^2=\epsilon_0^p\quad\quad\quad J_p\Dir_p =\epsilon_1^p\Dir_p J_p\quad\quad\quad J_p\Gamma_p =\epsilon_2^p\Gamma_p J_p \quad\quad\quad \Gamma_p \Dir_p = \epsilon_3^p\Dir_p \Gamma_p.\quad \quad \quad
\end{equation}
\begin{proposition}[\cite{nieuviarts2025emergence}]
	The spectral triple algebraic constraints imply that:
	\begin{equation}
		\label{ConstrainAlgST}
		\calO=\calO^\dagger\qquad\quad \qquad J\calO=\epsilon\calO J \qquad \qquad \quad \Gamma\calO=\epsilon^\prime\calO \Gamma.\qquad\qquad
	\end{equation}
\end{proposition}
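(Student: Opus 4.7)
The plan is to impose each axiom of the product spectral triple $(\calA_p,\calH_p,\Dir_p,J_p,\Gamma_p,\tw_p)$ that involves $\Dir_p$ on the ansatz $\Dir_p=\Dir\otimes\fbb+\calO\otimes\FDir$, and to separate the resulting tensor identity into its $\fbb$- and $\FDir$-components. Since $\fbb$ and $\FDir$ are linearly independent in $\calB(\HF)$, the two components must hold independently, so the three operator identities involving $\Dir_p$ reduce to three pointwise constraints on $\calO$.

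Self-adjointness of $\Dir_p$ is immediate: with $\Dir=\Dir^\dagger$ and $\FDir=\FDir^\dagger$, the conjugate of the ansatz is $\Dir\otimes\fbb+\calO^\dagger\otimes\FDir$, which matches $\Dir_p$ exactly when $\calO=\calO^\dagger$. For the commutation with $J_p=J\otimes\JF$, I would expand both sides of $J_p\Dir_p=\epsilon_1^p\Dir_p J_p$ using $J\Dir=\epsilon\,\Dir J$ (since $\epsilon_1=\epsilon$) and $\JF\FDir=\FDir\JF$ (the KO-dimension $6$ datum $\epsilon_1^F=1$). Matching the coefficient of $\JF$ first fixes $\epsilon_1^p=\epsilon$, and matching the coefficient of $\FDir\JF$ then yields $J\calO=\epsilon\,\calO J$. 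Proceeding analogously with $\Gamma_p\Dir_p=\epsilon_3^p\Dir_p\Gamma_p$, and invoking $\Gamma\Dir=-\epsilon'\Dir\Gamma$ together with $\GF\FDir=-\FDir\GF$, a parallel split produces $\epsilon_3^p=-\epsilon'$ and $\Gamma\calO=\epsilon'\,\calO\Gamma$.

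The computation itself is elementary; the main obstacle is bookkeeping, since one must simultaneously track the twisted parameters $(\epsilon_1,\epsilon_3)=(\epsilon,-\epsilon')$ on the continuous side, the KO-dimension $6$ parameters on the finite side, and the as yet undetermined product signs $\epsilon_i^p$. The separation into $\fbb$- and $\FDir$-components is legitimate precisely because $\FDir$ is not proportional to the identity on $\HF$, which is the standard situation in the Standard Model finite triple; otherwise the two tensor slots could collapse and the constraints on $\calO$ would be weaker. Finally, I would note that the remaining axioms $J_p^2=\epsilon_0^p$, $J_p\Gamma_p=\epsilon_2^p\Gamma_p J_p$, $\Gamma_p^2=\fbb_p$ and $\Gamma_p^\dagger=\Gamma_p$ do not involve $\Dir_p$ and so impose no further condition on $\calO$, confirming that the three listed identities are exactly the content of the spectral triple axioms applied to the ansatz.
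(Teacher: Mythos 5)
Your proposal is correct and is essentially the canonical argument: expand each axiom involving $\Dir_p$ on the ansatz $\Dir\otimes\fbb+\calO\otimes\FDir$, use the KO-dimension~6 signs on the finite side and $(\epsilon_1,\epsilon_3)=(\epsilon,-\epsilon')$ on the twisted side, and separate the $\fbb$- and $\FDir$-components using that $\FDir$ is not proportional to the identity. The proceeding itself only cites the companion paper for this proposition, but your computation, including the correct observation that the axioms not involving $\Dir_p$ impose nothing on $\calO$, is exactly the intended proof.
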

We obtain the following expected twisted anticommutation relation $\{\Dir_p, \Gamma_p\}_{\rho_p}=0.$

In the $\epsilon=-1,\epsilon^\prime=1$ case, one may choose $\calO=\Gamma$ so that $\{\calO,\Dir\}=0$. We shall not investigate this special case here. More generally, a noteworthy feature is that relations \eqref{ConstrainAlgST} together with the requirement $\calO^2=\mbb$ coincide precisely with the algebraic constraints the fundamental symmetry $\tw$ must satisfy. This nontrivial fact provides a natural motivation to consider the following Dirac operator
\begin{equation}
	\label{EqDirTot}
	\Dir_p=\Dir\otimes \fbb + \tw\otimes \FDir.
\end{equation}

The operator $\tw$ is thus associated with the finite part and becomes a central ingredient of the construction. Note that, from this perspective, it is the algebraic constraints of the structure that force the operator $\calO$ to be a fundamental symmetry. 

This specific choice also permit to satisfy the full first order condition
\begin{equation}
[[\Dir_p, a]_{\rho_p}, b_p^\circ]_{\rho_p^\circ} = 0, \qquad\qquad \forall a, b \in \calA_p.
\end{equation}
\begin{proposition}[\cite{nieuviarts2025emergence}]
	The choice $\calO=\tw$ ensures the existence of independent derivations:
	\begin{align}
		[\Dir_p, a]_{\rho_p}=[\Dir , a_1]_{\rho}\otimes a_2+ \tw a_1\otimes [\FDir , a_2].
	\end{align}
\end{proposition}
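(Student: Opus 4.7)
The plan is a direct computation on simple tensors $a=a_1\otimes a_2\in\CM\otimes\AF$, extending by linearity afterwards. The key observation to emphasize is that the specific choice $\calO=\tw$ is precisely the one that lets the cross terms of the twisted commutator collapse into a clean tensor product of two independent derivations, via the inner-automorphism relation $\rho(\,\cdot\,)=\tw(\,\cdot\,)\tw^\dagger$.

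First I would expand $\Dir_p\,a$ and $\rho_p(a)\,\Dir_p$ using the definitions $\Dir_p=\Dir\otimes\fbb+\tw\otimes\FDir$ and $\rho_p=\rho\otimes\fbb$. This yields four terms in the difference $\Dir_p a-\rho_p(a)\Dir_p$. Grouping the two terms involving $\Dir$ gives $(\Dir a_1-\rho(a_1)\Dir)\otimes a_2$, which is exactly $[\Dir,a_1]_\rho\otimes a_2$ by definition of the twisted commutator on the manifold side.

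The heart of the computation is then the treatment of the remaining cross terms $\tw a_1\otimes\FDir a_2-\rho(a_1)\tw\otimes a_2\FDir$. Here I would invoke the fact that $\tw$ is unitary and implements $\rho$, so that $\rho(a_1)\tw=\tw a_1\tw^\dagger\tw=\tw a_1$. Substituting this identity factors $\tw a_1$ out of both terms and produces $\tw a_1\otimes[\FDir,a_2]$, where the commutator on the finite side is untwisted, consistently with $\rho_p$ acting as the identity on $\AF$. Adding the two grouped contributions gives the claimed formula.

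I do not expect a serious technical obstacle; the step I would flag as conceptually important rather than computationally hard is justifying why $\calO=\tw$ is essentially forced. The preceding proposition already constrains $\calO$ to share the self-adjointness, reality and grading behaviour of $\tw$; on top of that, the calculation above shows that a Leibniz-type decomposition into independent derivations requires $\rho(a_1)\calO=\calO a_1$ for all $a_1\in\CM$, an identity that is automatic for $\calO=\tw$ but would otherwise have to be imposed by hand. In the manifold case this is also consistent with $\tw$ commuting with multiplication operators in $\CM$, since $\tw$ acts only on Clifford/spinor degrees of freedom. Beyond this remark, the proof reduces to the one-line algebraic manipulation described above.
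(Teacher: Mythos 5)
Your computation is correct and is exactly the intended argument: expanding $\Dir_p a-\rho_p(a)\Dir_p$ on simple tensors and using $\rho(a_1)\tw=\tw a_1\tw^\dagger\tw=\tw a_1$ to collapse the cross terms into $\tw a_1\otimes[\FDir,a_2]$ is the mechanism the paper itself points to when it says other choices of $\calO$ would not allow the two derivations to coexist. The proceedings omit the proof (deferring to the cited reference), but your direct verification, including the remark that the identity $\rho(a_1)\calO=\calO a_1$ is what singles out $\calO=\tw$, matches the paper's reasoning.
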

These results remain valid for noncommutative extensions of the manifold sector, where $[\Dir , a_1]_{\rho} \neq [\Dir , a_1]$. For this reason, we keep the notation $[\Dir , a_1]_{\rho}$ to emphasize the central role of the twisted commutator in the algebraic structure. Other choices for $\calO$ would not permit the coexistence of the derivations $[\FDir , a_2]$ and $[\Dir , a_1]_{\rho}$ within the full derivation. A similar result can be obtained in the special case $\epsilon=-1,\epsilon^\prime=1$, taking $\calO=\Gamma$ since $\Gamma$ commutes both with $\tw$ and $\calA$. This result offers a new perspective on the origin of twisted spectral triples from the almost-commutative viewpoint.

We can then rewrite the operator $\Dir_p$ as
\begin{equation}
	\Dir_p:=\tw \KDir_p=\tw(\Dir^\tw\otimes \fbb + \mbb\otimes \FDir).
\end{equation}

This extends to the fluctuations of $\Dir_p$ generated by the $\tw\otimes\fbb$-unitary operators:
\begin{equation}
	\label{EqGaugeTf}
	(U_\tw\otimes U) \Dir_p (U_\tw^\dagger \otimes U^{\dagger_F})= \tw(\Dir_{ A^\tw}^\tw \otimes \fbb + \mbb\otimes \Dir_{A_F}),
\end{equation}
where $U_\tw=u_\tw Ju_\tw J^{-1}$ is a $\tw$-unitary operator on $\calH$, with $U=u \JF u \JF^{-1}$ a unitary operator on $\HF$, and $\dagger_F$ the associated adjoint.
 
We now turn to the inner product structure. For any $\psi=\psi_1\otimes \psi_2$ and $\psi^\prime=\psi^\prime_1\otimes \psi^\prime_2$ in $\calH_p$, the inner product on $\calH_p$ is given by
\begin{equation}
	\langle \psi , \psi^\prime \rangle_p:=\langle \psi_1 ,\psi^\prime_1 \rangle \langle \psi_2 ,\psi^\prime_2 \rangle_F.\qquad\qquad
\end{equation}
The $\tw\otimes\fbb$-product is defined by 
\begin{equation}
	\label{EqFermAct}
	\langle \, \cdot \, , \, \cdot \, \rangle_{\tw\otimes \fbb}:= \langle \, \cdot \, , \tw\otimes \fbb\, \cdot \, \rangle_p =\langle \, \cdot \, , \, \cdot \, \rangle_{\tw}\langle \, \cdot \, , \, \cdot \, \rangle_{F}.
\end{equation}
We propose the fermionic action\footnote{The fermionic action in the noncommutative approach to the Standard Model yields the Euclidean Dirac action. Our formula differ from the one of \cite{chamseddine2007gravity}, which contain the real structure within the product.} to be the evaluation of $\Dir_p$ by $\langle \, .\,  , \, .\,  \rangle_p$
\begin{equation}
	\langle \psi , \Dir_p\psi^\prime \rangle_p.
\end{equation}
Since $\tw$ is a fixed structure for the twisted spectral triple, relation \eqref{EqFermAct} rewrites as
\begin{equation}
	\label{EqEval}
	\langle \psi , \KDir_p\psi^\prime \rangle_{\tw\otimes \fbb}=\langle \psi_1 , \Dir^\tw\psi^\prime_1 \rangle_{\tw} \langle \psi_2 ,\psi^\prime_2 \rangle_F+\langle \psi_1 , \psi^\prime_1 \rangle_{ \tw} \langle \psi_2 ,\FDir \psi^\prime_2 \rangle_F.
\end{equation}
We recover the (pseudo-Riemannian) Dirac action defined in \eqref{EqrDirLagPRST}. It is worth noting that the fixed nature of $\tw$ arises here from its association with the finite part in \eqref{EqDirTot}, and was independently deduced from the definition of the twist $\rho$ in equation \eqref{EqDefRhoAll}. The fact that $\tw$ is fixed singles out the KO-dimension $6$ pseudo-Riemannian spectral triple 
$(\CM,\, \calK,\,\KDir,\, J,\, \Gamma)$ among the symmetries allowed by the invariance 
of the fermionic action, permitting only 
pseudo-Riemannian-type structures and symmetries.

This action is invariant under the following gauge transformation
\begin{equation}
\KDir_p\,\to\, (U_\tw\otimes U) \KDir_p (U_\tw^\Tadj \otimes U^{\dagger_F}), \qquad\qquad \psi\,\to\,  (U_\tw\otimes U) \psi,
\end{equation}
equivalent to the action generating the fluctuations in \eqref{EqGaugeTf}, using equation \eqref{RelGaugeTF}.

 Following \cite{barrett2007lorentzian}, the fermion doubling problem is solved by working on the subspace of $\calH_p$ whose element $\psi$ respects the following relation\footnote{Note that the difference between this formula and the one of \cite{barrett2007lorentzian} stems from the differing conventions regarding the square of the grading operator; specifically, $\Gamma^2=-1$ in \cite{barrett2007lorentzian} whereas $\Gamma^2=1$ here. The two gradings are related by the complex coeficient $-i$. This also results in a different relation between $J$ and $\Gamma$.}
 \begin{equation}
 \Gamma_p\psi=J_p\psi=\psi.
 \end{equation}
As a consequence we obtain the following symmetry relation
\begin{equation}
	\langle \psi , \Dir_p\psi^\prime \rangle_p=	\langle J_p\psi , \Dir_p\psi^\prime \rangle_p=\epsilon_0^p\langle   J_p \Dir_p\psi^\prime, \psi \rangle_p=	\epsilon\langle \psi^\prime , \Dir_p\psi \rangle_p, 
\end{equation}
using the fact that $\epsilon_0^p=1$, $\epsilon_1^p=\epsilon$, see \cite{nieuviarts2025emergence}. To obtain a non vanishing action $\langle \psi , \Dir_p\psi \rangle_p$, the symmetry requirement on the fermionic action then implies the necessity of using anticommuting Grassmann variables for the fermionic fields only in the $\epsilon = -1$ case.

In this top-down approach, following the results of section \ref{SecMorphism}, we require the operator $\Dir$ to be associated with a positive metric $\gr$ through the twisted Clifford relation \eqref{EqDefCliffTw}. We define $\KDir=\tw\Dir$, then the matrices $\kgm^\mu\defeq \tw\tgm^\mu$, connected to the metric $\gk$ by the usual Clifford relation $2\gk^{ab}\mbb := \kgm^a\kgm^b+\kgm^b\kgm^a$.
The signature of $\gk$ is determined by the choice of $\tw$, which is constrained by the structure of the almost-commutative twisted spectral triple through the parameters $\epsilon$ and $\epsilon^\prime$:
\begin{equation}
	\tw^\dagger=\tw,\qquad \qquad\qquad \tw \Gamma=\epsilon^\prime\Gamma\tw, \qquad \qquad\qquad \tw J=\epsilon J\tw.\qquad\qquad
\end{equation}

\begin{proposition}[\cite{nieuviarts2025emergence}]
	In dimension~4, the operator $\tw$ satisfying the above conditions induces the transformation from the metric $\gr$ with signature $(+, +, +, +)$ to the metric $\gk$ with signature $(+, -, -, -)$ for $\epsilon=-1$, and to $(+, +, +, -)$ for $\epsilon=1$.
\end{proposition}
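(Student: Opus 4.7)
The plan is to translate the three algebraic constraints on $\tw$ into its commutation pattern with a $\gr$-orthonormal Clifford frame, and then read off the signature of $\gk$ directly from this pattern. I would choose a $\gr$-orthonormal basis $\{\gamma^a\}_{a=1}^{4}$ with each $\gamma^a$ Hermitian and $(\gamma^a)^2=\mbb$, set $\Gamma=\gamma^1\gamma^2\gamma^3\gamma^4$, and use $\tw=\tw^\dagger$ together with $\tw$-unitarity to obtain $\tw^2=\mbb$. Compatibility with the twisted Clifford relation~\eqref{EqDefCliffTw} then forces $\tw$ to implement a parity, namely $\tw\gamma^a\tw=\eta_a\gamma^a$ with $\eta_a\in\{\pm1\}$. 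Writing $\kgm^a$ for the pseudo-Riemannian Clifford generators associated with $\KDir=\tw\Dir$, a short computation using $\tw^2=\mbb$ together with the identification $\tgm^a=\tw\kgm^a$ yields $(\kgm^a)^2=\eta_a\mbb$, so the signature of $\gk$ is exactly the four-tuple $(\eta_1,\eta_2,\eta_3,\eta_4)$.

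The first step will be to extract $\epsilon^\prime$ and its geometric consequence. Conjugating $\Gamma$ by $\tw$ gives $\tw\Gamma\tw=\big(\prod_a\eta_a\big)\Gamma$, hence $\epsilon^\prime=\prod_a\eta_a$. Combining $J=\tw\JH$ from \eqref{EqTwistReal} with the dimension-four Euclidean relation $\JH\Gamma=\Gamma\JH$, the twisted KO-dimension~6 constraint $J\Gamma=-\Gamma J$ forces $\epsilon^\prime=-1$. Therefore exactly one or three of the $\eta_a$ equal $-1$, restricting $\gk$ a priori to a Lorentzian signature.

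The second step will discriminate the two remaining cases using $\epsilon$. Rewriting $\tw J=\epsilon J\tw$ via $J=\tw\JH$ and $\tw^2=\mbb$ gives the equivalent condition $\tw\JH\tw=\epsilon\JH$. For $\tw=\gamma^{a_0}$, the single anticommutation $\JH\gamma^{a_0}=-\gamma^{a_0}\JH$ yields $\tw\JH\tw=-\JH$, so $\epsilon=-1$. For $\tw=i\gamma^a\gamma^b\gamma^c$, where the factor of $i$ is forced by Hermiticity of a three-gamma product, the three anticommutations with $\JH$ combine with the anti-linearity of $\JH$ acting on $i$ to give $\tw\JH\tw=+\JH$, hence $\epsilon=+1$. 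One then reads off the signatures: $\tw=\gamma^{a_0}$ gives $\eta_{a_0}=+1$ and $\eta_b=-1$ for $b\neq a_0$, producing $(+,-,-,-)$ when $\epsilon=-1$; while $\tw=i\gamma^a\gamma^b\gamma^c$ gives $\eta_a=\eta_b=\eta_c=+1$ and $\eta_d=-1$ on the complementary index, producing $(+,+,+,-)$ when $\epsilon=+1$.

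The main obstacle will be to justify that $\tw$ must indeed act as a parity operator on the Clifford frame: the three algebraic constraints on their own do not force $\tw\gamma^a\tw\in\bbR\gamma^a$, and this structure must be imported from the twisted Clifford relation~\eqref{EqDefCliffTw}, or equivalently from the spacelike-reflection interpretation of $\tw$ established in Section~\ref{SecMorphism}. A secondary subtlety lies in Step~2, where one must carefully combine the anti-linearity of $\JH$ with the factor of $i$ demanded by Hermiticity of three-gamma products: it is precisely this interplay that flips the sign of $\tw\JH\tw$ between the two parity configurations, and thereby distinguishes the two Lorentzian signatures.
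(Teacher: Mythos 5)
Your argument is correct and reaches the stated conclusion by what is essentially the only available route (the proceeding itself only cites \cite{nieuviarts2025emergence} for this proposition, so there is no in-text proof to compare against): reduce the constraints on $\tw$ to a parity action $\tw\gamma^a\tw=\eta_a\gamma^a$ on a $\gr$-orthonormal frame, read off $\gk^{aa}=\eta_a$, use $\epsilon^\prime=\prod_a\eta_a=-1$ to force an odd number of sign flips, and then use $\epsilon$ to separate the single-gamma case from the $i\gamma^a\gamma^b\gamma^c$ case. Your identification of the main obstacle (that the parity property is not implied by the three algebraic constraints alone) is well placed; note that it can be obtained even more cheaply than by invoking \eqref{EqDefCliffTw}: the paper \emph{defines} $\gk$ by $2\gk^{ab}\mbb:=\kgm^a\kgm^b+\kgm^b\kgm^a$ with $\kgm^a=\tw\tgm^a$, and the diagonal case $a=b$ of the requirement that this anticommutator be scalar already gives $\tw\tgm^a\tw=\gk^{aa}\tgm^a$.

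Two sign conventions enter your Step~2 without being tied back to the paper, and the second one is load-bearing. First, $\JH\Gamma=\Gamma\JH$ (Euclidean KO-dimension~4) is needed to get $\epsilon^\prime=-1$ from $\epsilon_2=\epsilon_2^\tw=-1$; this is standard but should be stated as an input. Second, and more importantly, the base relation $\JH\gamma^{a}\JH^{-1}=-\gamma^{a}$ is asserted rather than derived; if the opposite convention $\JH\gamma^a\JH^{-1}=+\gamma^a$ held, your two cases would swap ($\epsilon=+1$ for a single gamma, $\epsilon=-1$ for the triple product) and the proposition's pairing would be reversed. The sign you use is the correct one in this framework, and it can be anchored either to the definition $\hat{\kappa}(v)=-\bar v$ in \eqref{EqDefRhoAll} together with $\JH=\CH\circ cc$, which gives $\JH v\JH^{-1}=-v$ on the Euclidean generators, or to the requirement $\JH\Dir=\Dir\JH$ for $\Dir=i\gamma^\mu\nabla_\mu$. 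With that anchoring made explicit, the proof is complete.
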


The Lorentzian quantum field theory case, corresponding to $\tw=\kgm^{(0)}$\footnote{The notation $\kgm^{(0)}$ follows from the results of section \ref{SecPseudoRiemST}. This emphasizes the fixed nature of $\tw$, in contrast with the gamma matrix $\kgm^{0}$, which may transform under coordinate or Lorentz transformations in suitable pictures. See \cite{nieuviarts2025emergence} for further details.}, is recovered for $\epsilon=-1$, consistently with the physical requirement that fermionic fields be described by anticommuting Grassmann variables. The term $\tw\otimes \FDir$ then yields the expected Dirac mass term $\kgm^{(0)}m$ in the fermionic action~\eqref{EqEval}, while the term $\langle \psi , \Dir^\tw\psi \rangle_{\tw}$ provides the kinetic contribution. Further results can be found in \cite{nieuviarts2025emergence}.

We emphasize that twisted spectral triple structures are parametrized by the choice of $\tw$, itself constrained by the algebraic structure. There is no hidden pseudo-Riemannian datum in this construction: defining the axioms through the $\epsilon_i$'s from the  $\tilde{\epsilon}_i$'s does not encode any underlying pseudo-Riemannian geometry, since these sign data apply equally well to both Riemannian and pseudo-Riemannian spectral triple structures. 

These results provide new insights into the origin of pseudo-Riemannian signatures and the associated Krein structures, which appear here as natural consequences of the almost-commutative structure underlying twisted spectral triples. In four dimensions, the presented top-down approach singles out the Lorentzian case from the algebraic constraints of the almost-commutative spectral triple. Since the construction is developed in the compact setting, this result should however be interpreted at a local or algebraic level, rather than as a full Lorentzian space-time construction with global causal structure.

An important structural aspect concerns the mixed term $\{\calO,\Dir\}\otimes \FDir$ in $\Dir_p^2$. It shows that, although the construction remains topologically Cartesian through the product algebra $\CM\otimes\AF$, its metric content goes beyond the usual metric product picture. The manifold and finite sectors do not simply carry two independent metric structures; rather, their metric data enter a single spectral object in a genuinely intertwined way. This interpretation is reinforced by the fact that the pseudo-Riemannian and the finite sectors both carry KO-dimension $6$. This suggests that the almost-commutative geometry should here be viewed not as a mere tensor product of two separate geometries, but as a more unified object with a global and mixed metric content.

The main point is therefore the following: the noncommutative extension at the heart of the noncommutative Standard Model may itself contribute to the emergence of time from a purely Riemannian background. In this perspective, time emerges as a consequence of the geometrization of fundamental interactions.

The present results should be viewed as a first step toward a broader picture. It remains in particular to extend the analysis to other KO-dimensions, to investigate the case of odd-dimensional manifolds, and to better understand the full implications of this non-product metric structure. A further question is whether the twist considered here admits a modular interpretation, in the spirit of \cite{connes2006type} in which twisted spectral triples were first introduced. An especially important direction for future work will be the study of the spectral action, together with its symmetries and physical implications.

\bibliography{bibliography}
\end{document}